\providecommand{\dotdiv}{
  \mathbin{
    \vphantom{+}
    \text{
      \mathsurround=0pt 
      \ooalign{
        \noalign{\kern-.35ex}
        \hidewidth$\smash{\cdot}$\hidewidth\cr 
        \noalign{\kern.35ex}
        $-$\cr 
      }%
    }%
  }%
}
\theoremstyle{definition}
\newtheorem{definition}{Definition}[section]
\newtheorem{proposition}{Proposition}[section]
\newtheorem{problem}{Problem}[section]
\newtheorem{example}{Example}[section]
\newcommand{\LLOC}{\mathsf{LLOC}}
\newcommand{\NOS}{\mathsf{NOS}}
\newcommand{\gsc}{\!\mbox{\Large;}}
\begin{document}

\title{Quantitative Expressiveness of Instruction Sequence Classes for Computation on Single Bit Registers}

\author{%
Jan A.\ Bergstra\thanks{Informatics Institute, University of Amsterdam, 
email: \texttt{j.a.bergstra@uva.nl, janaldertb@gmail.com}.}
}

\date{April 18, 2019}

\maketitle

\begin{abstract} The number of instructions of an instruction sequence is taken for its logical SLOC, 
and is abbreviated with LLOC. A notion of quantitative expressiveness is based on LLOC and in the special case of 
operation over a family of single bit registers a collection of 
elementary properties are established. A dedicated notion of interface is developed and is used for stating 
relevant properties of classes of instruction sequences.\\

\noindent ACM classes: F.1.1; F.2.1.

\end{abstract}

\tableofcontents

\section{Introduction}
This paper makes use of the theory and notation regarding instruction sequences for operation on 
Boolean registers as surveyed in~\cite{BergstraM2018} for the special case of operations on Boolean registers thereby 
following the notation of~\cite{BergstraM2016b} and simplifying the general presentation of~\cite{BergstraM2012} and~\cite{BergstraP2007b}. 

Existing notations and results regarding instruction sequences will be used mostly without further reference or technical introduction because such expositions having amply been published. 
We mention~\cite{BergstraM2007,BergstraM2012,BergstraM2012b,BergstraM2012d,BergstraM2014a,BergstraM2016b}
and \cite{BergstraM2018}), and further references listed in these papers.
For the following notions, terms and phrases, we refer to the  papers just mentioned and the references contained in those: 
basic instruction ($a \in A$), focus, method, focus method notation for basic instructions 
($a = f.m$ with focus $f$ and method $m$), yield  (also called reply) of a basic instruction(a), positive test instruction ($+a$), negative test instruction ($-a$), termination instruction ($!$), (forward) jump instruction ($\#k$), backward jump instruction ($\backslash\#k$), in direct jump instruction, finite PGA instruction sequence, (alternatively: single pass instruction sequence or PGA instruction sequence without iteration),  PGLB program (PGA instruction sequence with with backward jumps 
instead of iteration), generalised semi-colon (text sequential composition), 
thread, terminated thread (stopped thread $S$), diverging thread ($D$),  
thread extraction from an instruction sequence ($|X|$ for an instruction sequence $X$), service, service family, 
empty service family, service family composition operator, service family algebra, apply operator ($-\bullet-$),
 the method interface $M_{16}$ consisting of
16 methods of the form $y/e$ for Boolean registers with $y,e \in \{0,1,i,c\}$ ($y$ for yield, $e$ for effect). 

\subsection{Logical lines of code for an instruction sequence}
Because the identification of Booleans and bits may lead to confusion Boolean registers will be referred to as 
single bit registers below. 

The number of instructions of an instruction sequence is referred to 
as its length in e.g.~\cite{BergstraM2014a,BergstraM2016}.
However, in order to develop a terminology which is
more similar to the classical notion of LOC (lines of code, also referred to as SLOC for source lines of code) 
we will make use of the following terminology:
\begin{definition} LLOC (logical lines of code): for an instruction sequence $X$, written 
in any PGA-style instruction sequence notation, 
$\LLOC(X)$ 
\label{lloc} denotes the number of instructions of $X$.
\end{definition}
Conventions for the notation of instructions are such that $\LLOC(X)$ equals the number of semi-colons in $X$ plus one.
\begin{definition} An instruction sequence has low register indices if for each kind of register the collection of register 
numbers of registers involved in one or more of its basic instructions  constitute an initial segment of the positive natural numbers.
\end{definition}

$\LLOC(X)$ is not a precise measure of the size of $X$ in terms of bytes. A reasonable estimate is that for the instruction sequence notations used below, and assuming that the instruction sequence has low register indices, the size of $X$ as measured in bytes will not exceed say $200 \cdot \LLOC(X) \cdot\, ^{10}\log \LLOC(X)$.

We refer to~\cite{NguyenDTB2007} for an 
exposition on various forms of LOC and SLOC in software engineering practice. 
In the setting of PGA style instruction sequences
no distinction between a statement and an instruction is made and LLOC according to 
Definition~\ref{lloc} is a plausible interpretation of logical SLOC which 
is characterised in~\cite{NguyenDTB2007} as a metric, or rather a family of metrics, 
based on counting the number of statements in a source code.
LLOC as in Defnition~\ref{lloc} comes close to the metric used implicitly in~\cite{Massalin1987}.

\subsection{Existing approaches to program size}
Work on program size has been carried out in the setting of computability theory, for 
instance~\cite{Constable1971},~\cite{Meyer1972}, and~\cite{KatseffS1981} in relation to Kolmogorov complexity. 
In~\cite{Mehlhorn1982} program size is defined as the set of characters of a program and it is 
related with practical computational tasks, while~\cite{Chaitin2003}
links program size with information theory. Unlike these approaches we use a rather fixed family of 
program notations, viewing a program as a sequence of instructions. 
By taking the number of instructions as a metric full precision is obtained while at the same time 
abstraction from the ad hoc syntax of instructions is achieved.

\subsection{Objectives of the paper}
\label{objectives}
The objective of this paper is to describe some elementary quantitative observations pertaining 
to instruction sequences and LLOC metric under the simplifying assumption that basic actions operate on a family of 
single bit  registers, which arguably  are the simplest conceivable datastructures. 
We will assume that the semantics of an instruction sequence, i.e. what it computes,
  is a partial function from tuples of bits to tuples of bits, thereby excluding instruction sequences meant for 
  computing interactive systems.

We will demonstrate that for very simple tasks determination of the lowest LLOC of an implementation for that task is possible,
and we will show by means of examples that theoretical work on LLOC minimisation is greatly facilitated by being explicit 
about the precise method interfaces of various  single bit registers. 
For each bit there are $2^{16}$ possible interfaces and
therefore when designing an instruction sequence for a task $F$ involving $n$ input registers, and $m$ output registers,
while allowing the use of an arbitrary number of auxiliary registers, each with the same method interface, a total of 
$2^{16\cdot n\cdot m}$
different combinations of method interfaces each constitute potentially different versions of the problem to implement $F$
and to do so with a minimal (or relatively small) LLOC count. Many questions are stated and left unanswered.

A single bit register is a service (program algebra terminology for a system component able to execute the actions
of an instruction sequence)  which is accessed by a calling instruction via its focus. 
A focus plays the role of the name of a service, and at the same time it is informative about the role of the service. 
Below we will mainly consider the following foci: $in{:}i$ and $out0{:}i$ for $i \in \mathbb{N}$. 
The inputs for a computation are placed in the registers $in{:}i$ (so-called input registers), 
the outputs of a computation are found in the registers  $out0{:}i$ ($0$ initialised output registers). 
At the end of a computation the final value of the input registers is forgotten.
The focus prefixes $in$ and $out0$, are referred to as register roles. Other register roles exist, for instance 
 $out1$ for output registers which have have initial value $1$, $inout$ for a register which serves both as an input and as an output,
$aux0$ for an auxiliary register with initial value $0$, and $aux1$ for an auxiliary register with initial value $1$.
\subsection{Quantitative expressiveness versus qualitative expressiveness}
Expressiveness of a formalism for denoting instruction sequences may be measured in many ways. 
We will mainly consider the following idea: given a task, computing a total or partial function 
$F$ of type $ \mathbb{B}^n \to  \mathbb{B}^m$ we are interested in the shortest instruction sequence(s), 
taken from some class $K$ of instruction sequences, 
that is instruction sequences with a minimal number of instructions, which compute $F$. 
Clearly if $K_1 \subseteq K_2$ are two classes of instruction sequences then $K_2$ 
may be considered more expressive (more expressive w.r.t. LLOC) than $K_1$ if for some task $F$
all instruction sequences in $K_1$ that compute $F$ are longer than $n$ 
with $n$ the minimal LLOC for an instruction sequence in $K_2$ that implements task $F$.

\begin{definition} Let $K_1 \subseteq K_2$ be classes of instruction sequences. 
$K_2$ is more expressive (more expressive w.r.t. LLOC) than $K_1$ if for some task $F$
there is an instruction sequence $X \in K_2$ which computes $F$ while there is no instruction sequence 
$Y \in K_1$ which also computes $F$ such that $\LLOC(Y) \leq \LLOC(X)$.
\end{definition}
In some cases the smaller class of instruction sequences does not provide any implementation for a task 
which is implementable with the larger class. Then I will speak of differentiation of qualitative expressiveness.
\begin{definition} Let $K_1 \subseteq K_2$ be classes of instruction sequences. 
$K_2$ is qualitatively more expressive  than $K_1$ if for some task $F$
there is an instruction sequence $X \in K_2$ which computes $F$ while there is no instruction sequence 
$Y \in K_1$ which also computes $F$.
\end{definition}

\subsection{Rationale of designing additional forms of instructions}
Below several types of instructions outside the core syntax of PGA will be discussed: 
instructions for structured programming, 
backward jumps, indirect jumps, and generalised semi-colon instructions. 
These constitute merely a fraction of the options for extension of the syntax of instruction
sequences that have been explored in recent years.

We will assume that  that the rationale of the introduction of 
additional kinds of instructions is to achieve one or more of of four potential advantages, 
upon making use of the ``new'' instructions:
\begin{description}
\item [Fewer instructions.]
Some tasks may be implemented with a shorter instruction sequence, that is with fewer instructions. 
(This criterion when applied in practice amounts to the optimisation of program size or achieving good code compactness.)
\item [Fewer steps.]
A given task may be implemented by an instruction sequence which produces faster runs, i.e. fewer steps are taken till termination, either in the worst case or in average or according to some other efficiency criterion.
\item [Fewer mistakes.]
Correct or `high quality'' instruction sequences can be produced either more quickly, or in a more readable form or, in 
such a manner that some given form of analysis or verification is more easily applied, 
or can be applied with a higher rate of success.
\item [Fewer compiler optimisations.]
A given task may be implemented by an instruction sequence which allows the production of efficient compiled version with
fewer optimisation steps.
\end{description}
Below we will focus exclusively on the first two advantages. Undoubtedly the third advantage may become harder to achieve when optimising either code compactness or execution speed or both.

\subsection{Generalised semi-colon and a non-expanding LLOC metric}
We will make use of generalised semi-colon notation: $\gsc_{k=1}^n(X_k)=X_1;\ldots;X_n$. In order to apply
the  LLOC metric the generalised semi-colons must be expanded first. 

An alternative presentational metric $\LLOC_{gsc}$, called the generalised semi-colon non-expanding LLOC metric, 
works as follows: (i) for $X$ not containing any occurrence of the generalised sequential composition construct: 
$ \LLOC_{gsc}(X)= \LLOC(X)$, $ \LLOC_{gsc}(X:Y)= \LLOC_{gsc}(X)+ \LLOC_{gsc}(Y)$, and (iii) 
$\LLOC_{gsc}(\,\gsc_{k=1}^{n} (Y_k))=  \LLOC(Y_k)+ 2 + \lfloor ^2 \log n \rfloor $.
The idea is to count ``$\,\gsc_{k=1}^{n} ($'' as well as the corresponding closing bracket ``$)$'' 
as if these were instructions, and to add a logarithmic increment taking account for the size of $n$.

When writing an instruction sequence the use of generalised semicolon notation may improve readability. It may also be 
easier to write a compiler for instruction sequence expressions involving generalised semi-colons than for 
expanded versions thereof.

\subsection{Terminology and notation for roles}
\label{terminology}
The strings $in$, $out$, $inout$, and $aux$ serve as role headers which prefix the role base, 
whereas $0$ and $1$ are role postfixes which may be appended
to the role base. A role comes about given a role base from prefixing the role base with the header and in addition,
in case the header is either $out$or $aux$, postfixing the result with a postfix. 

For single bit registers the preferred role base is the empty string and the respective roles are 
$in,inout,out0,out1,aux0,aux1$. 
Corresponding foci include a further number so that different copies of services for the role at hand 
can be distinguished. Examples of foci for the various roles for 
single bit registers are are e.g. $in{:}7, inout{:}13, out0{:}3$.
Below we will introduce instructions with role base $1D$ for 1 dimensional single bit arrays, and we will use additional 
role bases $\_ a$ and $\_ b$ (with foci e.g. $in\_a{:}2m, in\_b{:}5, inout\_a{:}1$)  in order to enhance readability.
\section{Expressiveness of single pass instruction sequences}
All functions from bit vectors of length $n$  to bit vectors of length $m$ 
can be computed without the use of backward jumps, that is 
 without the use of any form of iteration or looping. Proposition~\ref{Expressiveness1} was shown for $m=1$ in
 taken from~\cite{BergstraM2014a}, the extension to $m>1$ is straightforward.
  The following function $l$  will be used. 
\begin{definition}  $l \colon \mathbb{N} \times \mathbb{N} \to \mathbb{N}$ is given by 
$l(0,m) = m+1$, $l(n+1,m) = 2 \cdot l(n,m)+2$.
\end{definition}
\begin{proposition} 
$l(n,m) = 2^n \cdot (m +3) -2$.
\end{proposition}
\begin{proof} Induction on $n$. The case $n=0$ is immediate. Step: $l(n+1,m) = 2 \cdot l(n,m) + 2= 
2 \cdot (2^n \cdot (m +3) -2) +2= (2^{n+1} \cdot (m +3) -4) +2= 2^{n+1} \cdot (m +3) -2$.
\end{proof}

\begin{proposition} 
\label{Expressiveness1} 
Let $n, m \in \mathbb{N}$, with $m >0$.  For each total $F \colon  \mathbb{B}^n \to  \mathbb{B}^m$ 
there is a finite PGA instruction
sequence  (i.e. single pass instruction sequence, or instruction sequence without iteration) $X^F$ 
with basic instructions 
of the form $f.y/e$ with focus $f \in \{in{:}1,\ldots,in{:}n,out0{:}1,\ldots,out0{:}m\}$ and 
method $y/e \in M_{16}$ which computes $F$. Moreover the $X^F$'s can be chosen such that
$\LLOC(X^F) = l(n,m)$.
\end{proposition}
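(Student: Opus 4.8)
The plan is to induct on the number $n$ of input registers, keeping $m$ fixed, so that the two clauses of the recurrence defining $l$ are matched one-to-one by a base case and an inductive step. For the base case $n=0$, a total $F \colon \mathbb{B}^0 \to \mathbb{B}^m$ is a single constant tuple $(b_1,\ldots,b_m) \in \mathbb{B}^m$. Since each output register $out0{:}j$ starts at $0$, I would take
$$X^F \;=\; out0{:}1.b_1/b_1 \,;\, \cdots \,;\, out0{:}m.b_m/b_m \,;\, {!},$$
where for each $j$ the plain basic instruction uses effect $b_j$ to force register $j$ to the required value (so $b_j/b_j \in M_{16}$). Control passes through all $m$ instructions regardless of their replies and then reaches $!$, so $X^F$ computes $F$; keeping all $m$ instructions even when some $b_j=0$ gives the uniform count $\LLOC(X^F) = m+1 = l(0,m)$.

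For the inductive step I would assume the statement for $n$ and, given $F \colon \mathbb{B}^{n+1} \to \mathbb{B}^m$, decompose it on its last argument: define $F_b \colon \mathbb{B}^n \to \mathbb{B}^m$ by $F_b(x_1,\ldots,x_n) = F(x_1,\ldots,x_n,b)$ for $b \in \{0,1\}$. The induction hypothesis supplies $X^{F_0}$ and $X^{F_1}$, each of length $l(n,m)$, over foci $\{in{:}1,\ldots,in{:}n,out0{:}1,\ldots,out0{:}m\}$. I would then set
$$X^F \;=\; {+}in{:}(n{+}1).i/i \,;\, \#(l(n,m)+1) \,;\, X^{F_0} \,;\, X^{F_1},$$
reading $in{:}(n{+}1)$ with the value-preserving method $i/i$. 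If its value is $1$ the positive test passes control to the jump $\#(l(n,m)+1)$, which lands exactly on the first instruction of $X^{F_1}$; if its value is $0$ the jump instruction is skipped and control falls into $X^{F_0}$. Thus the value-$1$ and value-$0$ restrictions are run by the correct sub-programs, the foci used are precisely those allowed for the $(n+1)$-register case, and the two added instructions realise the $+2$ in $l(n+1,m) = 2\cdot l(n,m)+2$, so $\LLOC(X^F) = l(n+1,m)$.

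The delicate points, which I expect to be the real content of the argument rather than the counting, are all about control flow. First, I must guarantee that each sub-program terminates through its own $!$ before control can fall through into the following text; I would carry this as an invariant of the induction, namely that every $X^F$ halts via an explicit termination instruction and never runs off its end (true in the base case and preserved by the step). Second, because $X^{F_0}$ and $X^{F_1}$ are spliced in at shifted positions, I should note that their internal forward jumps are relative and hence position-independent, so they still target the intended instructions after concatenation and none escapes its sub-program. Finally, I would check the offset $l(n,m)+1$ against the precise PGA convention for $\#k$ and confirm that the read of $in{:}(n{+}1)$ with effect $i$ leaves untouched every register used by the sub-programs. Once these correctness points are secured, the LLOC bookkeeping is immediate, since the decomposition contributes exactly the two instructions demanded by the recurrence.
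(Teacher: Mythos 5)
Your proposal is correct and follows essentially the same route as the paper: the same base case writing the $m$ constants into the output registers followed by $!$ (the paper uses method $1/\mathrm{d}_k$ where you use $b_j/b_j$, an immaterial difference since the yield of a plain instruction is discarded), and the same inductive step $+in{:}(n{+}1).i/i;\#(l(n,m)+1);X^{F_0};X^{F_1}$ splitting $F$ on its last argument. Your additional remarks on the control-flow invariants (termination via an explicit $!$, position-independence of relative jumps) are sound and merely make explicit what the paper leaves implicit.
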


\begin{proof} We will use induction on $n$. If $n=0$, 
$F$ produces a sequence of constants $(\mathrm{d}_1,\dots,\mathrm{d}_m)$
which is computed by  $X^F = \gsc_{k=1}^m (out0{:}1.1/\mathrm{d}_k);!$. We notice that $\LLOC(X^F) = m+1 = l(0,m)$. 

Now consider the case $n+1$. We split $F$ into $F_0$ and $F_1$ such that for all $\vec{b}$, $F(\vec{b},0) = F_0(\vec{b})$ and $F(\vec{b},1) = F_1(\vec{b})$. Using the induction hypothesis one may finc 
$X_0$ and $X_1$ with $l(n,m)$ instructions  each and 
which compute $F_0$ and $F_1$ respectively. Now the instruction sequence 
 $X = +in{:}(n+1).i/i;\#(l(n)+1);X_0;X_1$ computes $F$and that it $\LLOC(X)=l(n+1,m)$.  
\end{proof}
The design of $X$ has many alternatives. For instance setting
$X = +in{:}(n+1).i/0;\#(l(n)+1);X_0;X_1$ works as well, 
while its basic actions are less amenable to reading input from the  input registers 
as the value of input registers is set to 0 at the first method call to it.
 
\subsection{Computational metrics: NOS}
With $\NOS(X,H)$ (number of steps) I will indicate the number of instructions that is processed during the (unique) 
run of the instruction sequence $X$ on service family $H$. 
If divergence occurs, a jump with counter 0, or a jump outside the range of instructions, $\NOS$ takes the value $\infty$.
Equally if an error occurs, i.e. the occurrence of 
a method call outside the interface  provided by $H$, $\NOS$ takes the value $\infty$. 
Interfaces are discussed in detail in Section~\ref{Interfaces}.

Some examples of $\NOS$: $\NOS(!,H) = 1, \NOS(\#1;\#1;!,H) = 3, \NOS(\#1;\#0;!;!,H) = \infty$\\
$NOS(+out0{:}1.1/1;!, out0{:}1.br(1))=2$, $NOS(+out0{:}1.1/1;\backslash \# 2;!, out0{:}1.br(1))=2$\\
$ \NOS(\#2;!,H) = \infty$, and $ \NOS(+in{:}3.i/i ;!, out0{:}1.br(0)) =\infty$.

The instruction sequence $X$ which has been constructed in the proof of Proposition~\ref{Expressiveness1}  computes $F$ in such a manner that each result is found in precisely  $2^\cdot (m+3) -2$ 
steps. This is average, and worst case $\NOS$ of  $2^\cdot (m+3) -2$ for $n$ inputs and $m$ outputs. This figure for $\NOS$
 is fairly  low as, except from producing outputs, it provides  just one instruction on average to process 
each bit of input after it has been read. 

\subsection{Tradeoff between LLOC and NOS: an open question}
The implication of this observation is that it is easy to write an instruction sequence $X$ which produces fast computations, 
i.e. a low worst case $\NOS(X,H)$ fro relevant $H$ while it may be hard to ensure 
in addition that $\LLOC(X)$ is kept reasonably small. If $\LLOC(X)$ entails a combinatorial explosion, 
then so does the activity of designing and constructing $X$. 

In other words: given a task $F$ (with $n$ inputs and $m$ outputs, both fixed numbers) the 
programming problem to write an instruction sequence implementing the task primarily constitutes a 
challenge to find an implementation with low LLOC, a state of affairs which brings LLOC to prominence. 
It is unclear to what extent minimising LLOC stands in the way of obtaining a low worst case or average NOS
in practice, that is for meaning full tasks $F$. The following question, for which we have no answer, 
constitutes one of many ways to formalise this matter.

\begin{problem} Is there a family of functions 
$F^n \colon \mathsf{B}^n \to \mathsf{B}$ for which LLOC 
minimal implementing instruction sequences (admitting auxiliary registers) 
have superpolynomial worst case NOS performance?
\end{problem}

However, as minimising $\LLOC$ is in most cases an unfeasible challenge, it is reasonable to look for a combined metric. 
We are unaware of a plausible candidate for a combined metric, however, which leads us to stating the following 
conceptual question.

\begin{problem} Find a plausible metric for instruction sequences (which measures the success of a design) and 
combines $\LLOC$ and $\NOS$ by capturing a useful tradeoff between these. 
\end{problem}

In~\cite{BergstraM2018} we have presented various designs of single pass instruction sequences for multiplication of 
natural numbers in binary notation. 
As it stands we have no systematic method to assess the success of these designs in quantitative terms. 
The processing speed (low worst case NOS) which is achieved by way of a divide and conquer 
approach is relevant only if the cost in terms of LLOC is not too high, 
and we have no obvious way to assess that matter.

A way out of this matter is to insist that implementing a family of functionalities $F^{n,m}\colon \mathsf{B}^n \to \mathsf{B}^m$
by means of single pass instruction sequences $X_{n,m}$ must be done under the additional requirement that
$\LLOC(X_{n,m}) \in O((n+m)\cdot \log n \cdot \log m)$. That requirement, however, 
rules out the instruction sequences found in~\cite{BergstraM2018}.
\subsection{Backward jumps and LLOC, an open problem}
One may incorporate iteration by allowing backward jump instructions (written $\backslash \#k$). 
PGLB is the instruction sequence notation which admits the instructions from PGLA (without iteration) 
as well as backward jump instructions.
Thus PGLB instructions are $!,\#k,\backslash\#k, a,+a,-a$ with $a$ a basic action, 
i.e. an action of the form $f.m$ with $f$ a focus and $m$ a method.

\begin{proposition} There is a computable translation $\Psi$ which transforms PGLB instruction sequences for 
single bit  registers into finite PGA instruction sequences working on the same single bit  registers in such a manner that
for each $X$, $\Psi(X)$ computes the same function as $X$ on the single bit  registers which it makes use of.
\end{proposition}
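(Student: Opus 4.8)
The plan is to exploit the \emph{finiteness of the configuration space} of a PGLB instruction sequence running on single bit registers. Fix $X$; write $n$ and $m$ for the numbers of input and output registers it addresses, let $p$ be the total number of distinct register foci occurring in $X$, and let $L = \LLOC(X)$. A configuration of a run consists of a program position (one of the $L$ instructions, or a distinguished halted/out-of-range marker) together with the joint contents of the $p$ registers, so there are at most $C := (L+1)\cdot 2^{p}$ configurations. Since the single-step transition on configurations is deterministic, the sequence of configurations visited from any fixed initial register state is eventually periodic: if the run has not reached termination after $C$ steps it must have revisited a configuration and therefore loops forever. Hence, for each input $\vec{b}\in\mathbb{B}^n$, the run of $X$ terminates if and only if it terminates within $C$ steps; in particular whether $X$ halts on $\vec{b}$, and what it outputs when it does, is decidable uniformly in $X$.

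First I would use this to establish that the (in general partial) function $F \colon \mathbb{B}^n \to \mathbb{B}^m$ computed by $X$ is effectively determined by $X$: enumerate the finitely many inputs $\vec{b}$, simulate $X$ for up to $C$ steps on the corresponding initial register family, and record either the output tuple (when termination is reached) or the verdict ``diverges'' (when the step bound is exceeded, or a counter-$0$ jump, an out-of-range jump, or an interface error is encountered). Because $C$ is computable from the syntax of $X$ and each simulation is a bounded finite computation, this whole procedure is effective, which is exactly what is needed for $\Psi$ to be \emph{computable} rather than merely existent.

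Next I would assemble $\Psi(X)$ as a finite PGA instruction sequence implementing this partial function directly, reusing the binary-decision-tree construction from the proof of Proposition~\ref{Expressiveness1}. The sequence reads the input registers $in{:}1,\dots,in{:}n$ with positive test instructions and branches, via forward jumps only, to one of the $2^n$ leaves corresponding to the input $\vec{b}$. At a leaf on which $X$ terminates I emit the instructions that write the recorded output tuple into $out0{:}1,\dots,out0{:}m$ followed by $!$, exactly as in Proposition~\ref{Expressiveness1}; at a leaf on which $X$ diverges I emit a divergence gadget such as $\#0$, whose $\NOS$ is $\infty$. Since $\Psi(X)$ uses no backward jumps it is a legitimate finite PGA sequence, and by construction it terminates with the correct output precisely on those inputs where $X$ does and diverges on the rest, so it computes the same function as $X$.

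The hard part will be the justification that divergence is decidable, i.e.\ that a non-terminating run is necessarily \emph{cyclic} rather than merely long; this is where the finiteness of the single-bit register state space is essential, and where the argument would break down for registers over an infinite data type. A secondary point worth care is the reading of ``working on the same single bit registers'': $\Psi(X)$ reproduces the input/output behaviour of $X$ and hence uses the same input and output foci, but it need not retain the auxiliary registers of $X$, since the decision-tree realisation has no use for them. If one insists on literal simulation on the identical register family, one can instead \emph{unroll} the run into $C$ successive copies of a one-step simulator, redirecting each backward jump of $X$ into the next copy and forcing divergence past the final copy, at the cost of a larger and far less transparent $\LLOC(\Psi(X))$.
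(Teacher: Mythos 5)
Your proof follows essentially the same route as the paper's: tabulate the (finite) function computed by $X$ by exhaustively simulating $X$ on all $2^n$ inputs, then realise that table as a finite PGA instruction sequence via the decision-tree construction of Proposition~\ref{Expressiveness1}. You additionally supply two details the paper leaves implicit---the pigeonhole bound $C=(L+1)\cdot 2^{p}$ on configurations that makes divergence decidable, and the $\#0$ gadget for inputs on which $X$ diverges---so your version is, if anything, more complete than the original.
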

\begin{proof} Given PGLB instruction sequence $X$ working on $n$ inputs, all $2^n$ input vectors are presented to $X$ 
and the results are computed and collected in an appropriate finite datastructure. Now the proof of 
Proposition~\ref{Expressiveness1} is understood as the description of an algorithm by means of which the required 
instruction sequence is
created.
\end{proof}

\begin{proposition} If there exists a  translation $\Psi$ which transforms each PGLB instruction sequence  for single bit  
registers with low register indices into a finite PGA instruction sequence with low register indices working on the same single bit  
registers, perhaps making use of additional auxiliary registers, in such a manner that 
(i) for each $X$, $\Psi(X)$ computes the same function as $X$, and (ii) for each $X$, 
$\LLOC(\Psi(X))$ is bounded by a fixed polynomial $p(-)$ in LLOC$(X)$, 
then NP $\in$ P/Poly, and in fact NP $\in$ $P/O(n \log n)$
\end{proposition}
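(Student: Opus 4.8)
The plan is to turn the hypothetical translation $\Psi$ into a non-uniform polynomial-time decision procedure for an NP-complete problem. Fix an NP-complete language $L$ with a fixed polynomial-time verifier $V$ and witness length $m(n)=\mathrm{poly}(n)$, e.g. $L=\mathrm{SAT}$. For each input length $n$ I would construct, by a uniform procedure, a PGLB instruction sequence $X_n$ with low register indices that decides $L$ on $\mathbb{B}^n$: it reads the $n$ input bits from $in{:}1,\dots,in{:}n$, maintains a witness counter in auxiliary registers, and uses a single backward jump to loop over all $2^{m(n)}$ candidate witnesses, running one copy of the verifier code $V$ on the current witness inside the loop, terminating with output $1$ as soon as $V$ accepts and with $0$ once the counter overflows. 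The essential point is that iteration lets the verifier and the counter-increment code be written once, so $\LLOC(X_n)=\mathrm{poly}(n)$ even though the run of $X_n$ visits exponentially many witnesses; the $\NOS$ of $X_n$ is irrelevant here.

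Next I would apply the assumed translation. By hypothesis $\Psi(X_n)$ is a finite PGA (single-pass) instruction sequence with low register indices, it computes the same bit function as $X_n$ — hence decides $L$ on $\mathbb{B}^n$ — and $\LLOC(\Psi(X_n))\le p(\LLOC(X_n))=\mathrm{poly}(n)$. Crucially $\Psi$ need only exist, not be computable: I use the finite objects $\Psi(X_n)$ as non-uniform advice. Define a $\mathrm{P/Poly}$ machine that, on input $x$ of length $n$, loads $x$ into the input registers and simulates $\Psi(X_n)$, supplied as advice $a_n$. Because $\Psi(X_n)$ contains no backward jumps, its program counter never revisits an instruction, so $\NOS\le\LLOC(\Psi(X_n))$ and the simulation runs in time $\mathrm{poly}(n)$. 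Low register indices bound the number of registers — and hence the size of the simulated state and of the encoding of $a_n$ — by $O(\LLOC(\Psi(X_n)))$, so the advice length is polynomial. This yields $L\in\mathrm{P/Poly}$, and therefore $\mathrm{NP}\subseteq\mathrm{P/Poly}$.

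For the sharper bound $\mathrm{NP}\subseteq\mathrm{P}/O(n\log n)$ I would design $X_n$ so that $\LLOC(X_n)=O(n)$: for SAT the witness and work space already occupy $O(n)$ registers, each input register need only be touched once, and the loop body can be laid out with linear total length, so linear LLOC is the right target. Then, using the byte-size estimate $200\cdot\LLOC\cdot{}^{10}\log\LLOC$ quoted earlier, the advice $a_n=\Psi(X_n)$ occupies $O(\LLOC(\Psi(X_n))\cdot\log\LLOC(\Psi(X_n)))$ bits, which is $O(n\log n)$ provided $\LLOC(\Psi(X_n))=O(n)$.

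The main obstacle is exactly this last provision. The hypothesis only furnishes an arbitrary fixed polynomial $p$, so in general $\LLOC(\Psi(X_n))$ could be $\Theta(n^{\deg p})$, giving advice $O(n^{\deg p}\log n)$ — enough for $\mathrm{P/Poly}$ but not for $O(n\log n)$. Recovering the linear advice bound therefore requires either that the relevant translation act essentially linearly on the family $\{X_n\}$, or a more careful argument that the part of $\Psi(X_n)$ genuinely depending on $n$ (as opposed to a fixed interpreter skeleton) has length only $O(n)$. Pinning down this accounting, and verifying that a linear-LLOC PGLB solver for an NP-complete language with low register indices really exists, is the delicate step; the reduction to $\mathrm{P/Poly}$ itself is routine once $X_n$ is in hand.
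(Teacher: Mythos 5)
Your argument for the $\mathrm{P/Poly}$ part is correct and is essentially the paper's argument, spelled out in far more detail than the paper gives: the paper's proof consists of the remark that the translated instruction sequence can itself be read as an advice string, and defers everything else to the earlier work on instruction-sequence-based non-uniform complexity classes. Your explicit construction of the PGLB sequence $X_n$ (a straight-line verifier plus a backward-jump loop over witnesses, giving polynomial $\LLOC$), the application of $\Psi$, and the observation that a single-pass instruction sequence with low register indices can be simulated in time polynomial in its $\LLOC$ using advice of polynomial length, together give $\mathrm{NP} \subseteq \mathrm{P/Poly}$ exactly as intended.

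The one genuine divergence concerns the second conclusion, $\mathrm{NP} \in P/O(n \log n)$, which the statement asserts and which your proof, by your own account, does not deliver. You read $P/O(n\log n)$ in the standard Karp--Lipton way, as polynomial time with advice of length $O(n\log n)$ in the input length $n$; under that reading your objection is sound: an arbitrary fixed polynomial bound $p$ on $\LLOC(\Psi(X))$ cannot force advice of length $O(n\log n)$, and even $\LLOC(X_n)=O(n)$ is doubtful for SAT over single-bit registers without indirect addressing. The paper's own proof, however, states that the mechanism to compute the result of executing a single-pass instruction sequence with low register indices on given inputs is $O(n\log n)$ \emph{with $n$ the $\LLOC$ of the instruction sequence}; that is, its $O(n\log n)$ is the running time of the advice interpreter as a function of the advice's $\LLOC$, not the advice length as a function of the input length. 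So the discrepancy is largely one of interpreting a non-standard notation; under the standard reading the sharper claim is not established by the paper either, and your explicit flagging of the issue is the correct response rather than a defect of your argument.
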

\begin{proof}
The connection between instruction sequences and complexity theory with advice functions has been 
explored in detail in~\cite{BergstraM2014a}. 
The idea is that one may understand the instruction sequence itself as an advice function. 
The proof is an elementary application of the results in the mentioned paper. 
As a bit sequence the instruction sequence is of polynomial length in its number of instructions. The mechanism to 
compute the result of the execution of a single pass instruction sequence with low register indices on given inputs is 
$O(n \log n)$ with $n$ the $\LLOC$ of $X$.
\end{proof}

It follows from these observations that it is implausible that for each PGLB instruction sequence a finite PGA
instruction sequence of equal LLOC size can be found which computes the same function. 

Upon taking into account the presence of more powerful services it is easily possible to demonstrate that backward jumps 
allow to write shorter programs for certain tasks. 
This idea is pursued in detail for certain services that represent an array of bits, i.e. indirect addressing of single bit registers.

\subsection{The simplest array: using a single bit as an index}
The simplest Boolean array has two single bit registers. 
Its role base is $1Da$ (for 1 dimensional array), role headers are $in, inout,out,aux$. 
The method interface is as follows: (i) methods
$a1{:}y/m$ for $y/m \in M_{16}$ apply $y/m$ to the index bit $a1$, and (ii) methods $y/m$ for $y/m \in M_{16}$
can be used and will apply to the register indexed by the 
current value of $a1$. 

The initial value of both registers, when of relevance,  is given by the role postfix. The access bit 
is in fact also a part of this service kernel which has 8 states for that reason, 

For instance $out1D0{:}3$ is the focus for the $3$th output service of this kind for which it is 
required that the registers are initially set to $1$. (By consequence $out1D1{:}3$ is a different focus.)

Copying say $in1D{:}3$ to say $out1D0{:}7$ can be done as follows with $\LLOC=6$:\\
\noindent $Copy1D= +in1D{:}3.i/i;out1D0{:}7.1/1;out1D{:}7.1/c;+in1D{:}3.a1{:}i/c;!;\backslash 5$.

A lower bound on $\LLOC$ for array copying in dimension 1, for a single pass instruction sequence, however, is  
$7$: an access method for each of  both arrays  must appear twice (4), 
a method application  the  index bits of both arrays is necessary (2) and 
at least one termination instruction (1) is required. We find:
\begin{proposition} In the presence of 1D single bit addressed single bit arrays 
the use of backward jumps increases the expressive power of the instruction sequence notation.
\end{proposition}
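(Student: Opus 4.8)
The plan is to read ``increases the expressive power'' in the quantitative sense of the relevant definition, namely that $\mathrm{PGLB}$ is more expressive w.r.t.\ $\LLOC$ than the class of finite PGA (single pass) instruction sequences. This is the appropriate reading because the discussion preceding the statement already presupposes that single pass sequences can themselves compute array copying (it bounds the $\LLOC$ of such sequences from below), so no qualitative separation is available and the separation must be quantitative. Accordingly I would instantiate the definition with $K_1$ the class of finite PGA instruction sequences over the relevant 1D single bit arrays and $K_2 = \mathrm{PGLB}$ over the same arrays; since every finite PGA sequence is in particular a PGLB sequence we have $K_1 \subseteq K_2$, so the definition applies. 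The task $F$ is the array-copy functionality, copying $in1D{:}3$ into $out1D0{:}7$.

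First I would exhibit the required witness $X \in K_2$. This is exactly the instruction sequence $Copy1D$ displayed above: it uses the backward jump $\backslash 5$, computes the copy task, and has $\LLOC(Copy1D) = 6$. This is already justified in the preceding paragraph, so it may be taken as given.

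Next I would invoke the lower bound for the smaller class: every single pass instruction sequence $Y$ that computes array copying in dimension $1$ satisfies $\LLOC(Y) \geq 7$. This too is established above, by the counting argument that an access method for each of the two arrays must occur twice (contributing $4$), a method application to each of the two index bits is necessary (contributing $2$), and at least one termination instruction is required (contributing $1$). Combining the two facts, for every $Y \in K_1$ that computes $F$ we have $\LLOC(Y) \geq 7 > 6 = \LLOC(X)$, so there is no $Y \in K_1$ computing $F$ with $\LLOC(Y) \leq \LLOC(X)$. By definition this is precisely the assertion that $K_2$ is more expressive w.r.t.\ $\LLOC$ than $K_1$, i.e.\ that the admission of backward jumps strictly increases the $\LLOC$-expressive power.

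The main obstacle is not the assembly of these pieces but the lower bound itself. The delicate point is to argue that a single pass sequence genuinely cannot do better than $7$: without a backward jump the ``access one indexed bit, then advance the index'' block cannot be reused, so it must be written out separately for each of the two array positions, which is what forces both access methods and both index manipulations to appear. Making this counting fully rigorous --- in particular ruling out clever sequences that conflate an access with an index update, or that exploit the specific initial values of the $out1D0$ registers to save an instruction --- is where the real work would lie; here, however, the bound is supplied by the discussion preceding the statement and may be used directly, so the proof reduces to the comparison $6 < 7$.
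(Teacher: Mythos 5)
Your proposal is correct and follows exactly the argument the paper intends: the paper gives no separate proof, but the preceding paragraph (the witness $Copy1D$ with $\LLOC = 6$ using a backward jump, against the counting lower bound of $7$ for single pass sequences) is precisely the comparison you assemble, read quantitatively via the ``more expressive w.r.t.\ LLOC'' definition. You also correctly identify that the only real mathematical content is the lower bound, which the paper likewise asserts by the same informal counting rather than a fully rigorous case analysis.
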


The following question is open: 

\begin{problem} Is it the case that the introduction of backward jumps in addition to 
PGA instructions renders the instruction sequence notation more expressive in terms of 
allowing to compute some functions with smaller LLOC size, 
and for the purpose of computing Boolean functions.
\end{problem}
The stated question is not very specific for the precise syntax that allows for repetition. 
As an alternative to backward relative jumps one might consider: 
absolute jumps (see~\cite{BergstraL2002}), goto's with label instructions 
(see~\cite{BergstraL2002}) and indirect jumps (see~\cite{BergstraM2007}).

We expect that multiplication of two $n$-bits natural numbers thereby producing $2n$ output values 
constitutes a task for which the
availability of backward jumps provides a provable advantage in terms of the minimisation of the LLOC metric. 
This phenomenon may well appear for fairly low $n$, say $n=5$ or below. 

\subsection{Unfolding an instruction sequence with backward jumps}
Given an instruction sequence $X$ in PGLB notation, i.e. with backward jumps, 
one obtains $Y$ which computes the same function as follows. Let $X=u_1;\ldots;u_n$. 
In $X$ obtain $Z$ by replacing each jump $u_i$ by $u^\prime_i$ as follows:
\begin{itemize}
\item if $u_i \equiv \#k$ and $i+k\leq n$ then $u^\prime_i \equiv u_i$,
\item if $u_i \equiv \#k$ and $i+k > n$ then $u^\prime_i \equiv \#0$,
\item if $u_i \equiv \backslash \#k$ and $k \geq i$ then $u^\prime_i \equiv \#0$,
\item if $u_i \equiv \backslash \#k$ and $k < i$ then $u^\prime_i \equiv \#(n-k)$.
\end{itemize}
Then take $Y \equiv Z^\omega$. Assuming that $X$ works on a finite domain, 
for some $p>0$, $Z^p$ ($Z;\ldots;Z$, $p$ 
consecutive copies of $Z$) is a finite PGA instruction sequences which 
computes the same function as $X$. Moreover the computations take 
precisely as many steps as for $X$. In Section~\ref{Cmetrics} below we will discuss 
a computational metric for which $X$ and $Z^p$ are equivalent, assuming that $p$ is taken sufficiently large. We notice: 
$\LLOC(Z^p) = p\cdot \LLOC(X)$.
irect address options for the array. Moreover a termination instruction is needed. Together this makes up for 
al $CopyA_k=$ has $4.k +2$ instructions. For $k=3$ this yields $14$ which is well below  the $2^5-1 = 31$
 (for array copying in dimension 2) in the presence of a backward jump is easily found to be 

\section{Proper subclasses of single pass instruction sequences}
In this section we consider two restrictions on the design of instruction sequences in relation to expressiveness.
The first restriction is that no register is acted upon more than once. The second restriction imposes an upper
bound on te size of jumps. In Paragraph~\ref{simplifying} below we will 
consider a third proper subclass of instruction sequences, by disallowing intermediate termination.

\subsection{Single visit single pass instruction sequences}
A useful subclass of finite PGA instruction sequences consists of those instruction sequence which contain 
at most one method call for each register. We will refer to these instruction sequences as single visit instruction sequences. 

The single visit restriction comes with consequences for qualitative expressiveness.
Consider the instruction sequence $X$ with two inputs and one output. \\
$X = +in{:}1.i/i;\#3;-in{:}2.i/i;\#2;+in{:}2.i/i;out0{:}1.1/1;!$.\\
$X$ computes the function $ F(in{:}1, in{:}2) = in{:}2 \lhd in{:}1 \rhd (\neg\, in{:}2)$. As it turns out
imposing the requirement that single pass instruction sequences are also single visit instructions reduces the
qualitative expressiveness of the system.

\begin{proposition}
The  function $ F(in{:}1, in{:}2)$ as mentioned above cannot be computed by a single visit single pass PGA
 instruction sequence.
\end{proposition}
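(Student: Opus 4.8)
The plan is to first rewrite $F$ explicitly. The conditional $in{:}2 \lhd in{:}1 \rhd (\neg\, in{:}2)$ is the equivalence (XNOR) function: it yields $1$ on inputs $(0,0)$ and $(1,1)$ and $0$ on $(0,1)$ and $(1,0)$. The key structural observation is that in a single visit single pass sequence each of the foci $in{:}1$, $in{:}2$, $out0{:}1$ occurs in at most one instruction; call their positions $p_1,p_2,p_3$ when they occur. Since $out0{:}1$ is initialised to $0$ and is written at most once, at position $p_3$ via some method $y_3/e_3$, the final output equals $1$ exactly when $e_3 \in \{1,c\}$ and $p_3$ is actually executed. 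If $e_3 \in \{0,i\}$ the output is constantly $0$, so $F$ is not computed; hence I may assume $e_3 \in \{1,c\}$ and reduce the whole question to: \emph{can the set of inputs on which $p_3$ is reached equal the XNOR-true set?}

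Next I would show that both inputs are genuinely tested. Because the sequence is single pass, the program counter moves strictly forward (a $\#0$ or out-of-range jump would diverge, contradicting that the total $F$ is computed on all four inputs), so no position is revisited and $in{:}j$ can influence control flow only through its single occurrence. If that occurrence is a plain instruction, or a test with constant yield $y_j \in \{0,1\}$, or is absent altogether, then the reachability of $p_3$, and hence the output, is independent of $in{:}j$; since $F$ depends on both arguments this is impossible. Thus each $in{:}j$ occurs in a genuine test $\pm in{:}j.y_j/e_j$ with $y_j \in \{i,c\}$ at a distinct position, and by the symmetry of $F$ in its two arguments I may assume $p_1 < p_2$.

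The core of the argument then exploits the merge behaviour at the later read $p_2$. First I would argue that $p_2$ must be reached for both values of $in{:}1$: if for some value $v$ execution never reaches $p_2$, then for that $v$ the focus $in{:}2$ is never queried and the output is the same for $in{:}2 = 0$ and $in{:}2 = 1$, whereas $F(v,0) \ne F(v,1)$. Given that both $in{:}1$-branches arrive at the single position $p_2$, the entire continuation from $p_2$ onward is governed by $in{:}2$ alone, because $in{:}1$ is never re-read and $out0{:}1$ is touched only at $p_3$. Consequently whether $p_3$ is executed splits cleanly according to where $p_3$ lies: if $p_3 < p_2$ it can be reached only before $p_2$, so its reachability is a function of $in{:}1$ alone; if $p_3 > p_2$ it can be reached only after $p_2$, so its reachability is a function of $in{:}2$ alone. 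In either case the output is independent of one of the two inputs, contradicting the truth table of $F$.

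The step I expect to be the main obstacle is making the merge claim at $p_2$ fully rigorous: I must show that upon first arrival at $p_2$ all information carried by $in{:}1$ has been erased except possibly through the state of $out0{:}1$, and that this single write, located at the one position $p_3$, lies entirely on one side of $p_2$ and therefore cannot shuttle $in{:}1$-information across the merge. I would handle this with a configuration invariant that uses single visit (only three register-touching instructions exist) together with the strictly forward motion of the program counter (no position, in particular $p_3$, is executed twice), checking the small number of edge cases: intermediate termination before a read, the choice of test sign, and the admissible effect values at $p_3$.
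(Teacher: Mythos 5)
Your proposal is correct and follows essentially the same route as the paper: both arguments hinge on the observation that, since the output must depend on $in{:}2$ for either value of $in{:}1$, both branches of the $in{:}1$ test are forced to converge at the unique $in{:}2$ test, after which (and, via your case split on whether $p_3$ lies before or after $p_2$, also through the single write to $out0{:}1$) the result can no longer depend on both inputs. Your explicit treatment of the position of the output write and of the admissible effect methods actually makes rigorous a step the paper passes over quickly; the only point you leave implicit is the paper's one-line remark that a single visit to an auxiliary register can only return its known initial value and hence cannot shuttle branch-dependent information across the merge.
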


\begin{proof} For single visit instruction sequences the use of auxiliary registers is not relevant as the first and last method
call to it, if any call is made, will only return the know initial value of an auxiliary register.
Assume that (i) $Y$ is a single visit single pass PGA instruction sequences which has the required functionality,
(ii) $Y$ contains at most one call to each of the three single bit  registers involved, and (iii) the first method call to a register 
in $Y$ is for $in{:}1$. Now notice that after reading $in{:}1$ for both replies the intended output still depends 
on the content of $in{:}2$. Thus in both cases at some stage (i.e. after $0$ or more jumps) 
some test instruction takes input from $in{:}2$. As there is only a single test instruction for 
$in{:}2$ in $Y$ it follows that irrespective of the outcome of
the test in $in{:}1$ that test on $in{:}2$ is performed. As a consequence the result of the computation 
of $Y$ cannot depend on the initial content of $in{:}1$ which is wrong. So one may assume that the first 
call is to register $in{:}2$. Now the output still depends in the value of $in{:}1$ and therefore in both cases the unique call to
$in{:}1$ is reached and the output must be determined after processing that instruction so that it will not depend on the value
read from $in{:}2$. \end{proof}
 
\subsection{Single pass instruction sequences with bounded jumps}
Another plausible restriction on single pass PGA instruction sequences results from imposing an 
upper bound to the size of jumps. At the moment of writing we have no answer concerning the following question.

\begin{proposition}
\label{TwoJumps}
Each Boolean function with finite range and domain be computed by a single pass PGA instruction 
sequence that involves jumps of size at most 2.
\end{proposition}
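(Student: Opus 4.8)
The plan is to abandon the decision-tree layout used in the proof of Proposition~\ref{Expressiveness1} altogether. That layout is precisely what manufactures long jumps: the instruction $\#(l(n)+1)$ exists only to skip over the entire sibling subtree $X_0$, and any sequential placement of two substantial code blocks reached by the two outcomes of a single test unavoidably needs a jump whose size grows with the size of a block. Instead I would compute $F$ in a \emph{branch-free} fashion, simulating a Boolean circuit — equivalently, a disjunctive normal form for each of the $m$ output bits — gate by gate as a straight-line concatenation of fixed constant-size gadgets. The only control-flow primitive used is the skip of a test instruction, which advances by exactly $2$; consequently every jump that occurs is of size at most $2$, and no explicit long forward jump is ever introduced.

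Concretely, first I would fix a small library of gadgets, each of length $1$ or $2$ and each using only a test instruction together with register method calls from $M_{16}$. The basic gadget is the \emph{conditional set} ``if $A$ then set $r$ to $1$'': the instruction $+f.i/i$ (for the focus $f$ whose current value is $A$) followed by the single instruction that sets $r$ to $1$, so that reply true falls through $(+1)$ into the set while reply false skips $(+2)$ past it; the negated variant uses $-f.i/i$. From this, on a $0$-initialised register, one builds disjunction by two successive conditional sets on the same register, and the indicator of a conjunction (a minterm) by first setting a fresh auxiliary register $aux0{:}k$ to $1$ and then clearing it with a conditional-clear gadget for each literal that fails. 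Reading an input bit nondestructively is done with the method $i/i$ on the focus $in{:}i$, which may be repeated freely, since the single-visit restriction is not imposed here.

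Next I would assemble the program. For each output bit $j$ I write its value as an OR of minterms; for each minterm I compute its indicator into an auxiliary register via the conjunction gadget and then append a conditional set of $out0{:}j$ from that indicator, reusing the auxiliary for the next minterm. Concatenating all these gadgets over all $m$ outputs and terminating with a single $!$ yields a single pass PGA instruction sequence. Because the pieces are straight-line apart from the $+2$ skips internal to each fixed-shape gadget, and because such a skip always lands exactly on the start of the following gadget, the whole sequence contains only jumps of size at most $2$; the use of auxiliary registers is unrestricted, as permitted. There is no $\LLOC$ constraint to meet, so the blow-up (polynomial in $2^n$) caused by expanding to normal form is harmless.

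The one point to get right — and really the only obstacle — is the bookkeeping that keeps every test skip landing where intended as gadgets are composed: each gadget must have a fixed length so that a skip of exactly $2$ clears precisely the single instruction it is meant to clear and then merges into the continuation, and the conditional-clear chain of a conjunction must be ordered so that no $+2$ skip overshoots. This is routine once the gadget shapes are fixed; the genuine content of the proposition is the conceptual step of replacing branching control flow, which is what forces long jumps, by branch-free evaluation of a circuit, after which the bound of $2$ comes for free from the semantics of the test instruction.
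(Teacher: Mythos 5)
Your construction is correct as a proof of the literal statement, but it takes a genuinely different route from the paper, and the difference matters for how the proposition is used later. The paper also abandons the decision-tree layout, but it does so by brute-force table lookup over all $2^n$ input vectors: for each vector $\alpha$ it emits a block consisting of an equality-test chain (each literal tested with the method $(i\lhd\alpha_j\rhd c)/i$, each test followed by a $\#2$) and then an output-write chain, so that failure of any literal triggers a cascade of $\#2$ jumps that carries control to the next block. Crucially, that construction touches only the input and output registers --- the conjunction of the literals is never stored anywhere; it is encoded in the control flow via the chained $\#2$ instructions. Your version instead accumulates each minterm's indicator in an auxiliary register (initialise to $1$, conditionally clear on each failing literal) and thereby eliminates explicit jump instructions altogether, which is a stronger conclusion on the jump side but a weaker one on the register side: you need auxiliary registers where the paper does not. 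That is not a defect relative to the statement as written, which does not forbid auxiliaries, but it collapses Proposition~\ref{TwoJumps} into essentially the same territory as Proposition~\ref{nojumps} (jump-free computation \emph{with} auxiliary registers), whereas the paper evidently intends the two results to be complementary --- see the discussion after Example~\ref{Exam2}, where they are cited as distinct options. If you replace your register-accumulated conjunction by the paper's trick of chaining the literal tests with $\#2$ escapes, you recover the auxiliary-free version at the cost of reintroducing size-$2$ jump instructions, which is exactly the trade the proposition is calibrated to make.
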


\begin{proof} We consider a function $F(-,-,-)$ taking its arguments from registers $in{:}1, in{:}2, in{:}3$
and producing results $F_1(-,-,-), F_2(-,-,-)$ and $F_3(-,-,-)$ in registers $out0{:}1, out0{:}2$ and $out0{:}3$. 
The construction is done in such a manner that it generalises to all cases.

Let $\alpha^1,\ldots,\alpha^{2^3}$ be an enumeration of the arguments of $F$. We write 
$\alpha^i = (\alpha^i_1,\alpha^i_2,\alpha^i_3)$. An instruction sequence $X_F$ computing $F$ is found as follows:
$$X_F= \mbox{\Large ;}_{k=1}^{2^3}(test_F^{\alpha^k};+out0{:}1.0/F_1(\alpha);\#2;+out0{:}2.F_2(\alpha);\#2;+out0{:}3.F_3(\alpha));!$$ with $test_F^{\alpha} =-in{:}1.(i\lhd \alpha_1\rhd c)/i;\#2; -in{:}2.(i\lhd \alpha_2\rhd c)/i;
\#2;-in{:}3.(i\lhd \alpha_3\rhd c)/i;\#2$.
\end{proof}

Following~\cite{BergstraM2012} an instruction sequence with jumps of size $1$ only can be transformed into an equivalent instruction sequence without jumps. Thus the use of jumps of size $1$ does not increase expressiveness. Moreover, 
In the presence of auxiliary registers can be avoided.

\begin{proposition}
\label{nojumps}
With the use of arbitrarily many auxiliary registers (say $aux0{:}1,\ldots,aux0{:}k$) each function on single bit  registers can be computed by a single pass PGA instruction sequence without jumps. 
\end{proposition}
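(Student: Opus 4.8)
The plan is to show that jumps can be systematically eliminated at the cost of introducing auxiliary registers that record, in effect, the current program counter together with enough state to re-enter the computation at the correct point. I would start from Proposition~\ref{TwoJumps}, which already delivers a single pass PGA instruction sequence computing any given Boolean function using jumps of size at most $2$. Since we may take this restricted form as our starting point, the elimination task reduces to removing only the small forward jumps $\#2$ that appear in those canonical sequences, rather than arbitrary jumps.

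First I would observe that a jump $\#2$ occurring at position $i$ means ``skip the next instruction and continue at position $i+2$''. The natural encoding is to spend one auxiliary register per jump site as a one-bit flag: before the instruction that might be skipped, set the flag according to whether the preceding test forces the skip, and then guard the potentially-skipped instruction so that its observable effect is suppressed when the flag indicates the jump should have been taken. Concretely, a fragment of the form $+a.m;\#2;u$ is to be replaced by a straight-line fragment that first performs the test $a.m$, records its reply into a fresh $aux0$ register, and then replaces $u$ by a test on that auxiliary register whose two continuation branches both fall through linearly: when the flag says ``jump'', $u$ is turned into a no-op (a test whose effect on the output registers is the identity $i/i$, or a suppressed write), and otherwise $u$ is executed in full. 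Because the test instructions in $M_{16}$ are rich enough to read a stored bit and to perform conditional identity versus conditional write, each branch can be simulated without transferring control, so no jump is needed.

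The key steps, in order, are: (1) invoke Proposition~\ref{TwoJumps} to obtain a jump-$\le 2$ implementation $X_F$ of the target function $F$; (2) walk through $X_F$ from left to right, and for each jump instruction allocate a new auxiliary register $aux0{:}j$ and rewrite the local window $+a.m;\#2;u$ into an equivalent jump-free straight-line block that stores the test reply and then conditionally neutralises $u$; (3) verify that the rewritten blocks compose correctly, i.e.\ that removing a jump never changes the reachability or the effect of any later instruction, which holds because the transformation is purely local and control always falls through; and (4) count the auxiliary registers used, which is at most one per jump site and hence bounded by $\LLOC(X_F)$, giving the claimed finite $k$. The termination instruction $!$ is preserved unchanged, and the final sequence has no jump instructions at all.

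The hard part will be handling the interaction between a single logical jump and the \emph{multiple} instructions it may be intended to skip over in the general construction, together with ensuring that a suppressed instruction truly leaves every register---output and auxiliary alike---in the state it would have had if control had genuinely jumped past it. In the size-$2$ case each jump skips exactly one instruction, so the neutralisation is a single conditional identity and the bookkeeping stays simple; I expect the main technical care to go into arguing that this local rewriting is faithful when several guarded blocks are chained, and into confirming that the auxiliary registers, which are read at most to decide suppression, never corrupt the intended semantics. I would therefore present the argument as an induction on the number of jumps in $X_F$, peeling off one jump at a time and appealing to the local equivalence just described, so that the jump-free sequence is obtained after finitely many steps.
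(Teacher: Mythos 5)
You take a genuinely different route from the paper, and as written it has a gap. The paper's own proof is a two-line reduction: take any single pass implementation of $F$ from Proposition~\ref{Expressiveness1} and invoke the main result of \cite{BergstraM2012}, which supplies, with the help of sufficiently many $0$-initialised auxiliary registers, a jump-free instruction sequence whose thread coincides (after abstraction from internal steps) with that of the given one. You instead attempt a self-contained elimination of jumps, starting from the jump-$\le 2$ normal form of Proposition~\ref{TwoJumps} and replacing each window $+a.m;\#2;u$ by a straight-line block that stores the reply in a fresh flag register and conditionally suppresses $u$.

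The gap is in your step (3), the claim that the rewriting is ``purely local'' and that removing a jump never changes the reachability or effect of later instructions. In the sequences Proposition~\ref{TwoJumps} actually produces, the $\#2$'s occur in chains $-a;\#2;-b;\#2;-c;\#2;\ldots$, so (i) the single instruction a jump skips is frequently itself a $\#2$, for which ``a conditional identity'' is not a meaningful neutralisation; (ii) the target of each $\#2$ is the next $\#2$ in the chain, so the local windows of consecutive jumps overlap (they share the intervening test) and rewriting one window destroys the shape of its neighbour; and (iii) expanding one window into a longer straight-line block shifts the position that every remaining $\#2$ points to. Hence the induction ``peel off one jump at a time'' does not preserve the form your rewriting relies on, and your explicit assertion that in the size-$2$ case ``the bookkeeping stays simple'' is exactly where the argument fails. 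The repair is a global predication scheme rather than a window-by-window one: allocate one flag per conditional block, compute ``the input equals $\alpha$'' into that flag by a jump-free chain of guarded updates (e.g.\ $+in{:}i.(c\lhd\alpha_i\rhd i)/i;aux0{:}j.1/1$ records a mismatch, exploiting the known initial value $0$), and then guard each output write of the block individually by a test on that flag. Your individual ingredients are sound --- recording a reply without a jump, and suppressing a single non-jump, non-test instruction by a preceding test --- but they must be deployed globally over whole blocks, which is essentially the content of the theorem of \cite{BergstraM2012} that the paper simply cites.
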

\begin{proof} Using Proposition~\ref{Expressiveness1}, given $F$,  some single pass PGA instruction 
sequence $X$ over registers used by $F$ may be chosen such that $X$ computes 
$F$. Using  the main result of~\cite{BergstraM2012}, with the help of sufficiently may auxiliary registers 
$aux0{:}1,\ldots,aux0{:}k$ a single pass instruction sequence $Y$ is found such that 
$X / \bigoplus_{l = 1}^{k} aux0{:}l.br(0) = |Y|$ (after abstraction from internal steps). It follows that 
$|X| \bullet H \oplus  \bigoplus_{l = 1}^{k} aux0{:}l.br(0) = |X| \bullet H$ which implies that that $X$ computes $F$.
\end{proof}
Although large jumps are not required for the computing any Boolean function, it still 
may be the case that imposing a restriction to small jumps leads to the need for 
longer instruction sequences, or it may imply the need for the use of more auxiliary registers. 
\begin{example}
\label{Exam1}
Consider the function $G^k$ with one input $in{:}1$ and $2k$ outputs 
$out0{:}1,\ldots,out0{:}k$,\\ $out0{:}k+1,\ldots,out{:}k+k$. $G^k(0)$ returns with 
$out0{:}1,\ldots,out0{:}k$ each set to $1$ while 
$G^k(1)$ returns with $out0{:}k+1,\ldots,out0{:}k+k$ set to $1$. $G^k$ is computed by:
$$X^k_G = -in{:}1.i/i;\#k+2;\gsc_{l=1}^{k} (out0{:}l.1/1);!;\gsc_{l=1}^{k} (out0{:}(k+l).1/1);!$$
$\LLOC(X^k_G) = 2k+3$ but in this case the jump instruction can be avoided, thereby achieving $\LLOC= 2k+2$, as follows:
$$Y^k_G = +in{:}1.i/i;\!\mbox{\Large ;}_{l=1}^{k-1} (-out0{:}l.1/1;-out0{:}(k+l).1/1);-out0{:}k.1/1;out0{:}(2k).1/1;!$$
None of the instructions of $Y^k_G$ can be avoided for any instruction sequence able to compute $X^k_G$.
It follows that
$Y^k_G$ is demonstrably a shortest instruction sequence able to compute $G^k$.
\end{example}

\begin{example}
\label{Exam2}
Now the example is modified by having additional inputs which govern whether or not the outputs are to be set to 1. Moreover
these additional inputs serve also as outputs and are complemented with each call,
For focus naming non-empty role bases (see Paragraph~\ref{terminology}) $\_a$ and $\_b$ are used and the function $E^k$ is 
computed by $X_E^k$ with $\LLOC(X_E^k)=4k+4$:

\noindent
$$X_E^k =+inout{:}1.i/c;\#2k+2;\gsc_{l=1}^{k} (+inout\_a{:}l.i/c;out0\_a{:}l.1/1);!;$$
$$\gsc_{l=1}^{k} (+inout\_b{:}l.i/c;out0\_b{:}l.1/1);! \texttt{~~~~~~~~~~~~~~}$$

\end{example}
It is easy to see that for $k \geq 1$ no instruction sequence with fewer than $4k+2$ instructions can compute $E^k$. From
Proposition~\ref{nojumps} we know that $E^k$ can be computed by an instruction sequence without jumps 
with the use of auxiliary registers, and from Proposition~\ref{TwoJumps} we know that it can be computed by means of
an instruction sequence involving jumps with length $2$ or less. The latter instruction sequence may be quite long, however. 
By admitting jumps of size $3$ LLOC $5k + 4$ can be achieved:\\

\noindent
$$Y_E^k =+inout{:}1.i/c;\gsc_{l=1}^{k} (\#3;+inout\_a{:}l.i/c;-out0\_a{:}l.1/1);\#2;!;$$
$$\gsc_{l=1}^{k} (+inout\_b{:}l.i/c;out0\_b{:}l.1/1);!\texttt{~~~~~~~~~~~~~~~}$$
It is plausible that for increasing $k$ the shortest single pass 
PGA instruction sequences for computing $E^k$ must involve 
increasingly large jumps, as does $X^k_E$. 
Proving that to be the case is another matter, however. We will provide a partial result on that matter in 
Proposition~\ref{Exam2ctd} below, making use of 
interfaces in order to restrict the scope of the assertion and thereby to allow for its proof.

\section{Interfaces}
The setting of instruction sequences acting on service presents an incentive for the introduction and application 
of various forms of interfaces. 
Interfaces may may be classified and qualified in different ways. To begin with we distingiosh required interfaces and provided interfaces. A thread or an instruction sequence
comes with a required interface, whereas services and service families come with a provided interface.\footnote{%
Mathematically speaking required interfaces and provided interfaces are the same, 
though when working with interface groups required interfaces and provided interfaces 
may be thought of as inverses w.r.t. composition.}

If a component is placed in a context, it is plausible to assume that the component comes with a required interface and that the context has a provided interface and to 
require for a good fit that the component's required interface is a subinterface of the context's provided interface.\footnote{%
The roles of component and context are not set in stone: if an instruction sequence $X$ computes over a service family $H$, the thread $|X|$ is placed in a context made up of $H$ (denoted $|X|\bullet H$), whereas if the instruction sequence $X$ uses the service
family $H$ by way of the use operator $-/-$ (denoted $|X|/H$) it is less plausible to take this view as the use operator is not
based on the assumption that $H$ provides a way of processing for each request (method call) that is required 
(issued) by $|X|$.} 
 
\subsection{Service kernels and  method interfaces}
\label{SKandMI}
A method interface is a finite set of methods (i.e. method names).
A service kernel $U$ is a state dependent partial function from methods to Booleans. The domain of that function 
is called the method interface of the service kernel, and it is denoted with $I_m(U)$. The method interface of $U$ is 
 supposed to be independent of the state of $U$. Applying method $m \in I_m(U)$ to $U$ produces a yield 
 $y(m,U) \in \mathbb{B}$ and an effect $e(m,U)$ which technically is another service kernel with $I_m(e(m,U)) = I_m(U)$. 
 It is a useful convention to write $U = U(s)$ thereby making a state $s$ explicit. 

Using notation with an explicit state we may write $y(m,U(s)) =y_m(s)$ and  $e(m,U(s) ) = U(e_m(s))$.  A service
kernel with an empty method interface is called degenerate or inactive. 
We will equate all inactive kernels denoting these with the constant $0$ which satisfies: $I_m(0) = \emptyset$.

\subsection{Method interface of a single bit register service kernel} 
There are 16 methods for single bit  registers, written $y/e$ (yield / effect), 
with $y,e \in \{0,1,i,c\}$. These are the methods applicable to any single bit  register kernel $br(b)$ 
(i.e. $br(-)$ with content $b \in \mathbb{B}$). We write $M_{16}$ for the collection of these. 

Each subset $J_m$ of $M_{16}$ constitutes a method interface. 
Consequently there are $2^{16}$ method interfaces for single bit  registers.

For $J_m \subseteq M_{16}$, $\partial_{J_m}(br(b))$ denotes the service kernel which admits 
precisely the methods of $M_{16} -I_m$ on
the single bit  register $br(b)$. It follows that 
for $b \in \{0,1\}$, $br(b) = \partial_{\emptyset}(br(b))$, so that 
$I_m(br(b)) = M_{16}$, and $I_m(\partial_{J_m}(br(b)))=J_m$.

\subsection{Focus kernel linking and service family composition}
A service kernel $U$ may be linked to (or: prefixed by, or: positioned under, or: combined with) a focus $f$ whereby a 
new service $f.U$ is obtained. If $f \neq g$ then $g.U$ is a different service starting out as a copy of $U$.

Service families are combinations of services created from the empty service family and services $f.U$ by way of 
service family composition (denoted $-\oplus-$) which is commutative and associative, 
and for which the empty service is a unit element.  Service family composition is not idempotent, 
however, as $f.U\oplus f.V = f.0$ (with $0$ as in Paragraph~\ref{SKandMI} above), 
thereby indicating that ambiguity in the
service provided by a context is considered problematic, rather than  it is resolved in a non-deterministic manner.
Indeed if services with the same focus are
combined an ambiguity arises as to which service kernel is to process $m$, 
and for this dilemma no simple solution exists. For that reason the combination
$g.U \oplus g.W$ is understood as an error in the algebra of service families.
 
When combining services $f.U$ and $g.W$ the service family $f.U \oplus g.W$ is obtained. If a basic action $h.m$ is
applied to a service family $H= \bigoplus_{l=1}^n f_l.U_l$ then two cases are distinguished:  
(i) $h$ equals one of the $f_l$ in which case the method $m$ is applied to $U_l$, 
so that either if $m \in I_m(U)$ a reply is obtained and the state of $U_l$ is updated, 
or otherwise an error occurs, or (ii) none of the $f_l$ equals $h$ in which case an error occurs. 

When computing the application of an instruction sequence to a service family 
(i.e. computing $X \bullet H$) an error is represented by having the empty service family as the result:  
$X \bullet H = \emptyset$. Evaluation of $h.m$ over 
$H= \bigoplus_{l=1}^n f_l.U_l(s_i)$  works fine as long as there is at most a single 
$l$,  for which $h = f_l$ and moreover for that $l$, $m \in I_M(U)$. In that case performing basic action $h.m$ yields reply 
$y^l_m(s_l)$ while changing the state $s_l$ of 
$U_l$ to $e^l_m(s_l)$, leaving the states of the other services in the service family  unmodified. 
In other cases the empty service family is produced.

 In the case of single bit services the inactive service kernel is denoted with $br(\star)$ rather than with $0$, 
i.e. a register containing an error value, so that:
$g.br(b) \oplus g.br(c) = g.br(\star)$. Applying any method to a register containing $\star$ is considered a run time error,
the handling of which depends on the context in principle. In the setting of this paper an error leads to the 
production of the empty service family.

\subsection{Service family restriction}
Let $V$ be a set of foci and $H$ a service family. Then $\partial_V(H)$, the $V$-restriction of $H$,
 results from $H$ by removing 
(i.e. replacing by $\emptyset$) each service $f.U$ in $H$ with $f \in V$). For the special case that $V = \{f\}$ 
we find that each $H$ can be written in one of two forms: $H = \partial_{\{f\}}(K)$ or $H = \partial_{\{f\}}(K) \oplus f.U$.
Service family restriction satisfies some useful equations: 
$\partial_V(H \oplus K) = \partial_V(H) \oplus \partial_V( K)$, 
$\partial_{V \cup W}(H ) = \partial_V\circ \partial_W(H)$, 
$ f \in V \to \partial_V(f.U) = \emptyset$, $f \notin V \to \partial_V(f.U) = f.U$.

\subsection{Basic action interfaces} A basic action (name) is a pair $f.m$ with $f$ a focus and $m$ a method (name). 
A basic action interface is a finite collection of pairs $f.V$ where $f$ is a focus and $V$ is 
a method interface. The notation is simplified by writing $f.V$ for the basic action interface $\{f.V\}$, and by writing $-+-$ for union. Both instruction sequences and service families come with a basic action interface. We write $I(H)$ for the basic action interface of a service family $H$ and $I(X)$ for the basic action interface of an instruction sequence $X$.

For instruction sequences the  interface $I(X)$ collects all focus method pairs that occur in instructions in $X$. 
$I(X)$ is a required interface, as it collects requests (method calls) which an environment is supposed to respond to. Defining equations for $I(-)$ are:
$I(!) = I(\#k) = \emptyset$, $ I(g.m) = I(+g.m) =I(-g.m) = g.\{m\}$, and $I(X;Y) = I(X) + I(Y)$, in combination with
$g.V + g.W = g.(V\cup W)$. 

For a service family $H$, $I(H)$ collects the method calls to which $H$ is able to respond. 
For service families the interface definition is less straightforward than for instruction sequences: 
$I(\emptyset) = \emptyset$, $I(g.U) = g.I_m(U)$, 
$I(\partial_{\{g\}}(H) \oplus g.U) = I(\partial_{\{g\}}(H)) + I(g.U)$. From these equations it follows that 
$I(g.U \oplus g.V) = g.\emptyset$, and therefore distribution of $I$ over $U$, i.e. $I(g.U \oplus g.V) = I(g.U) + I(g.V)$, 
fails if $I_m(U) \neq \emptyset$ 
and if $I_m(V) \neq \emptyset$.
\section{Interfaces as constraints on instruction sequences}
\label{Interfaces}

Given a basic action interface $I$  the collection of PGA instruction sequences for acting on single bit registers
$X$ such that $I(X) \subseteq I$
 is denoted $\mathrm{IS}_{br}(I)$. Membership of $\mathrm{IS}_{br}(I)$ for an appropriate basic 
 action interface $I$ is a useful constraint on an instruction sequence. 
We will provide several examples of such constraints in the following Paragraphs of this Section.

Interfaces are partially ordered by inclusion ($I \subseteq J$, $I$ is a subinterface of $J$, 
$I$ is contained in $J$, $J$ includes $I$).
An interface $I$ may serve as a constraint on instruction sequences, in particular the 
requirement that the required interface of the instruction sequence $X$  is not too large: $I(X) \subseteq I$. 

At the same time a basic action interface $I$ may serve as a constraint on a service family $H$ 
on which
$X$ is supposed to operate: $I \subseteq I(H)$, that is the requirement that the 
provided interface of $H$ is not too small.
We will provide four examples of the use of interfaces  in the following Paragraphs.
\subsection{Alternative initialisation of output registers}
An obvious extension of the instruction set outlined in Paragraph~\ref{objectives} above is to allow to make use of registers $out1{:}k$ which have $1$ as the  initial content. Allowing 1-initialised output registers extends the class of instruction 
sequences in such a manner that there is a gain of expressiveness.

To see this improvement consider the function $F(-)$ with a single input $in{:}1$ which 
takes constant value $0$.  Working with interface $J = in{:}1.i/i + out0{:}1/1$  the mere termination instruction $!$
constitutes an instruction sequence that computes $F$ with LLOC 1. Alternatively if an instruction sequences is sought for
 in $J^\prime = in{:}1.i/i + out1{:}0/0$, a longer instruction sequence (LLOC 2) such as  $out1{:}1.0/0;!$, is required.

\subsection{Bit complementation}
We will consider  the  function $F\colon \{0,1\} \to \{0,1\}$ given by $F(x) = 1-x$. $F$ 
represents complementation (negation). 

Below seven  instruction sequences each of which compute $F$ are listed. 
By imposing restrictions on the basic action interface serving as a constraint the differences between these options 
for implementing complementation of a single bit can be made explicit.  

The role $out$ stands for a register which serves as an output as it won't be read, but which may have initial value 0 or 1.
Thus a single bit register with focus say $out{:}1$ may have arbitrary initialisation.
\begin{itemize}
\item $I_1 = inout{:}1.M_{16}$. Both inputs and outputs reside in $inout{:}1$. The instruction sequence 
$X_1 = inout{:}1.1/c;!$ computes $F$
and is a shortest possible program because at least one basic action 
needs to be applied to the input and a termination instruction must be included.
\item $I_2 = in{:}1.M_{16} + out0{:}1.M_{16}$. In this case output is placed in a different register serving as  
an output register only. $X_2 = +in{:}1.i/i;out0{:}1.1/1;!$ computes $F$. 
Moreover a shorter implementation cannot be found: the input needs to be read and some 
writing of outputs is unavoidable as well as a termination instruction.

\item $I_3 = in{:}1.M_{16} + out{:}1.M_{16}$. 

The instruction sequence 
  $X_3 = +in{:}1.i/i;+out{:}1.0/1;out{:}1.0/0;!$ computes $F$. A shorter instruction sequence for computing $F$ when
implemented under the constraints of basic action interface $I_3$ does not exist. An input instruction is necessary, 
and both values must be written by some output instruction because both outputs can arise while the initial 
content of the output register is not known in advance.

\item $I_4 = inout{:}1.\{i/i,1/1,1/0\}$.  Now the instruction sequence \\
$X_4 = +inout{:}1.i/i;-inout{:}1.1/1;inout{:}1.1/0;!$ is in $\mathrm{IS}_{br}(I_4)$ and computes $F$
and it is easy to see that it constitutes  a shortest possible program in $\mathrm{IS}_{br}(I_4)$ for that task.
\item $I_5 = in{:}1{:}.\{c/0\} + out0{:}1.\{1/c\}$. Now  
$X_5= +in{:}1.c/0;out0{:}1.0/c;!$ is in $\mathrm{IS}_{br}(I_5)$, computes $F$, and as such has minimal LLOC for that task. 

\item $I_6 = in{:}1.\{i/i\} + out{:}1.\{i/0,i/1\}$.   A shortest implementation of $F$ under these constraints is:
$X_6=out{:}1.i/0;-in{:}1.i/i;out{:}1.i/1;!$. 

\item $I_7 = in{:}1.\{i/i\} + out{:}1.\{i/c\}$.  $F$ is computed by \\$X_7 = -out{:}1.i/c;out{:}1.i/c;-in{:}1.i/i;out{:}1.i/c;!$.
\end{itemize}

\begin{proposition}
\label{complementation}
As a single pass instruction sequence computing $F$ under the 
constraint that $I(X) \subseteq in{:}1.\{i/i\} + out{:}1.\{i/c\}$ $X_7$ minimises LLOC.
\end{proposition}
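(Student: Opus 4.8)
The plan is to fix the two elementary semantic facts that drive everything and then separate a correctness check from a length lower bound. First I would record that under $I_7$ the only actions available are $in{:}1.i/i$, which yields the input bit $x$ and leaves $in{:}1$ untouched, and $out{:}1.i/c$, which yields the current content of $out{:}1$ and complements it; no other focus is permitted, so in particular auxiliary registers are excluded by the interface constraint itself. Writing $w$ for the arbitrary, hence unknown, initial content of the role-$out$ register, an immediate invariant is that after any run the final content of $out{:}1$ equals $w$ complemented once for every executed call $out{:}1.i/c$; thus it is fixed by $w$ together with the parity of the number of executed output calls. Since $F$ demands final content $1-x$ for both possible values of $w$, the required parity of executed output calls is forced to be $1-(w+x)$ modulo $2$, i.e.\ odd exactly when $w=x$. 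I would then verify $X_7$ against this by running the four pairs $(w,x)$ through its five instructions, checking in each case that $!$ is reached and that the output parity comes out as required; this gives $\LLOC \le 5$.

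For the lower bound I would argue that five instructions are unavoidable. Three are forced at once: a termination instruction $!$ must be present and reached (falling off the end does not terminate cleanly, exactly as in the minimality arguments already given for $X_1$ and $X_2$); at least one test on $in{:}1$ must occur, since otherwise the run, and hence the output, is independent of $x$; and at least one test on $out{:}1$ must be executed, since the required parity depends on $w$ while the only way control flow can come to depend on $w$ is through the reply of an $out{:}1.i/c$ test (an $in$-test reply is $x$, and jumps are unconditional). This yields $\LLOC \ge 3$ and pins down the three indispensable ingredients. Ruling out $\LLOC=3$ is short: with a single output call present the toggle count is $0$ or $1$, and because no output test precedes it, whether it is executed cannot depend on $w$ for fixed $x$, so the output parity cannot depend on $w$, contradicting the forced parity. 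The same single-output-call obstruction disposes of every $\LLOC=4$ layout whose extra instruction is not a second output call (a jump, a second $in$-test, or a redundant $!$): the lone toggle is still reached $w$-independently for fixed $x$.

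The hard part will be the remaining $\LLOC=4$ case, in which the four instructions are exactly one $in$-test, two output calls (at least one a test), and one $!$, with no jumps. Here I would exploit the single-pass discipline: the program counter strictly increases, so each textual position is visited at most once and, for fixed $x$, the runs with $w=0$ and $w=1$ proceed in lockstep until the first executed output test, where they split. If a plain output call precedes that test, both $w$-runs share it and none remains afterwards, forcing equal toggle parities, which is impossible; so the first executed output call is a test reading $w$ exactly, and the parity constraint forces the second output call to be executed precisely when $w\neq x$, hence to lie textually after the test. The crux is then a short but genuinely case-based argument that no placement of the $in$-test and $!$ can realise ``execute the second output call iff $w\neq x$'' while still terminating on all inputs: since the output test branches by skipping a single instruction, one of the two $w$-branches is always routed either past the $in$-test (destroying all dependence on $x$, hence wrong for some $x$) or straight into $!$ or off the end (terminating too early, or diverging). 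I would organise this by the textual position $q\in\{1,2,3\}$ of the first output test and, within each, the two placements of the $in$-test and $!$; the main obstacle, and the only place needing real care, is confirming that each of these finitely many layouts fails for at least one input $(w,x)$. Assembling the pieces gives $\LLOC \ge 5$, matching $X_7$.
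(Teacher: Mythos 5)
Your skeleton is the same as the paper's (force a termination instruction, an $in{:}1$ test and two $out{:}1$ calls to get $\LLOC\ge 4$, then refute every four\mbox{-}instruction layout), but your execution of the counting step is actually tighter than the paper's. The parity invariant --- final content $=w$ complemented once per executed $out{:}1.i/c$, required parity odd iff $w=x$ --- yields a genuine proof that two output\mbox{-}call instructions are indispensable: a lone output call has no output test before it, so whether it is executed is a function of $x$ alone and the parity cannot depend on $w$. The paper only asserts that ``two different write instructions (for $0$ and for $1$) must appear.'' The same obstruction lets you dismiss every four\mbox{-}instruction layout whose fourth slot is a jump, a duplicate test, or an extra $!$, a family of cases the paper's proof never mentions; and your lockstep observation (for fixed $x$ the $w=0$ and $w=1$ runs agree up to the first executed output test, which must therefore precede the other output call) is a clean structural lemma where the paper uses ad hoc eliminations of $u_1$ and $u_3$.

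The problem is that you stop at what you yourself identify as the crux. The claim that no arrangement of one $in$\mbox{-}test, two output calls and one $!$ can realise ``execute the second output call iff $w\ne x$'' is supported only by a heuristic (``one branch is routed past the $in$\mbox{-}test, or into $!$, or off the end''), and this is the one step where the proposition could conceivably fail, so it cannot be waved through. The enumeration does close, and it closes fastest if you organise it by the position of $!$ rather than of the first output test: $!$ in position $1$ or $2$ means some run terminates having executed at most one other instruction, with output $w$ or $\bar w$ on a branch along which either $w$ or $x$ is still free, so the output cannot equal $1-x$ throughout; $!$ in position $3$ forces position $4$ to be dead code (any run reaching it then falls off the end), leaving only three live instructions, already excluded; $!$ in position $4$ leaves three orders of the remaining instructions, and in each the skipping branch of the first test either falls off the end or reaches $!$ with a toggle count that is constant in $w$ (if the first test reads $in{:}1$) or constant in $x$ (if it reads $out{:}1$), contradicting the required parity $1-(w+x)\bmod 2$. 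Until those half\mbox{-}dozen cases are written out, you have $\LLOC(X_7)\le 5$ but not the matching lower bound.
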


\begin{proof}
$\LLOC(X_7)=5$ and therefore consider an implementation $Y$ of $F$ which has 4 instructions, say $Y = u_1;u_2;u_3;u_4$. 
We may assume that $u_4 = !$ because otherwise $u_4$ cannot be performed unless a faulty termination takes 
place with the effect that $Y$may be simplified to three or even fewer instructions while still computing $F$. 
That is impossible because at least one read instruction on $in{:}1$ and two different write instructions 
on $out{:}1$ (for $0$ and for $1$) must appear in $Y$. 
This observation also implies that the $\LLOC(Y)$ is at least 4. So $Y = u_1;u_2;u_3;!$. 
If $u_3$ were an input instruction, the output of $Y$ is independent of the input, which is not the case. 

Thus $u_3 \in \{ out1{:}1. i/c, +out1{:}1.i/c, -out{:}.i/c\}$.  
Now a case distinction on $u_1$ reveals that
$u_1 = -out1{:}1.i/c$ fails because  starting with $out{:}1 = 1$ the second instruction  
is skipped and no input action is performed. Similarly $u_1 = +out1{:}1.i/c$ fails because  starting with $out{:}1 = 0$ 
the second instruction  is skipped and no input action will be performed. 
If $u_1 = out1{:}1.i/c$ then the collection of results for $out1{:}1 = 0$ and for $out1{:}1 = 1$ is 
left unchanged when deleting $u_1$ so that a shorter instruction sequence $u_2;u_3;!$ 
also implements $F$ which has been ruled out already. 

Thus $u_1$ is an input instruction. It must be a test instruction because otherwise the output will not depend on the input. Let $u_1 = +in{:}1.i/i$, the symmetric case $u_1 = -in{:}1.i/i$ can be dealt with similarly. Upon input $in{:}1=0$
the computation of proceeds with $u_3;!$ with $u_3\in \{ out1{:}1. i/c, +out1{:}1.i/c, -out{:}.i/c\}$. Consider the case with initial value $1$ for $out{:}1$ then for each option for $u_3$ the resulting value for $out{:}1$ is 
$out{:}1 =0$ instead of the required output $out{:}1 = 1$. In all cases a contradiction has been derived thus contradicting the initial assumption that $Y$ with LLOC equal to 4 computes $F$.
\end{proof}

The following fact admits an easy but tedious proof, the details of which are left aside.
\begin{proposition} For each basic action interface $I \subseteq in{:}1.M_{16} + out{:}1.M_{16}$ the following holds:  
if $F$ (complementation) can computed by a finite single pass instruction sequence in $\mathrm{IS}_{ba}(I)$ then
$F$ can be computed by an instruction sequence with  LLOC at most 5.
\end{proposition}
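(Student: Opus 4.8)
\emph{Plan.} I read the statement as requiring the short program to remain inside $\mathrm{IS}_{ba}(I)$; under the weaker reading the conclusion would follow at once from $X_3$, so the content is that complementation, \emph{whenever it is computable under the constraint $I$ at all}, admits an $I$-respecting implementation of at most five instructions. Two book-keeping facts drive everything. First, $\mathrm{IS}_{ba}(-)$ is monotone, $I\subseteq I'$ giving $\mathrm{IS}_{ba}(I)\subseteq\mathrm{IS}_{ba}(I')$, so a witness assembled from the methods of a subinterface is automatically a witness for every larger interface; it therefore suffices to treat the \emph{minimal} interfaces $I_0\subseteq in{:}1.M_{16}+out{:}1.M_{16}$ that admit any implementation of $F$ and to exhibit, for each, a program of $\LLOC\le 5$ all of whose basic actions lie in $I_0$. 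Second, $!$ and the jumps $\#k$ have empty required interface, so they may be used freely; likewise the interface records only the focus--method pair and not the test polarity, so given a method in $I_0$ one may use it in $+$, $-$ or plain form at will. Finally the bit-swap symmetry $\sigma$ exchanging the role constants $0$ and $1$ (fixing $i$ and $c$) carries the whole problem to itself and halves the case count.

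Next I would pin down exactly which interfaces are computable, through two necessary conditions. (i) Some input method $in{:}1.y/e$ with a content-sensing yield $y\in\{i,c\}$ must occur: if every available $in{:}1$-method has a constant yield $y\in\{0,1\}$ then no test reply depends on $x$ and the final content of $out{:}1$ is independent of the input. A single such read is enough, and by the polarity remark the two sensing yields are interchangeable, so the input side contributes essentially one instruction. (ii) Let $E\subseteq\{0,1,c\}$ be the set of non-trivial effects carried by the available $out{:}1$-methods. Driving $out{:}1$ from its \emph{unknown} initial value $w$ to a target $t\in\{0,1\}$ for all $w$ is possible exactly when $t$ is \emph{forceable}: either $t\in E$, or $c\in E$ together with either an absolute setter $1-t\in E$ or a content-sensing output method (yield $i$ or $c$ on $out{:}1$, letting $w$ be read and the complement applied conditionally). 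Complementation needs both $0$ and $1$ forceable; if one of them is not, $out{:}1$ cannot reach the value some input demands, $F$ is not computable, and the implication is vacuous.

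For each computable configuration I would then produce a witness of $\LLOC\le 5$ using only methods of the interface at hand (jumps and $!$ being free). When $\{0,1\}\subseteq E$ the $X_3$-pattern --- sense the input, then write the correct constant in the two branches, routing with a free forward jump --- stays within five (indeed four when a zero-yield setter is present, as in $X_3$). When $c\in E$ and an absolute setter is available, one branch writes the known constant and the other writes it and complements, again within five. The extremal case is $c\in E$ with no absolute setter but output sensing present: here the current content must first be normalised and then conditionally complemented, which is precisely what $X_7$ accomplishes in five instructions, and by Proposition~\ref{complementation} five is optimal there, so the bound is tight.

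The main obstacle is the negative half of (ii): non-computability must be \emph{proved}, not merely observed. The clean device is an invariant on the reachable contents of $out{:}1$. With only the complement effect available and no output sensing, the control flow --- and hence the number of complements executed --- is a function of $x$ alone, so the final content is the unknown $w$ complemented an $x$-determined number of times, which cannot equal the $w$-independent value $1-x$ for both $w=0$ and $w=1$; with only a single absolute setter and no complement the reachable final contents, as $w$ varies, form $\{0,w\}$ (respectively $\{1,w\}$) and never collapse to the opposite constant target. The remaining work is purely combinatorial: checking that the split on $E$, on output sensing and on input sensing is exhaustive up to $\sigma$, and that each exhibited witness draws only on the methods of the minimal interface under consideration, so that the monotonicity lift is legitimate.
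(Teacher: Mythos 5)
The paper does not actually prove this proposition---it is introduced with the remark that it ``admits an easy but tedious proof, the details of which are left aside''---so there is no proof of record to compare yours against; what you have written is a correct and essentially complete plan for the omitted argument. Your reading of the statement (the five-instruction witness must itself lie in $\mathrm{IS}_{ba}(I)$) is the only non-vacuous one, and the reduction to minimal computable subinterfaces is legitimate precisely because $I(!)=I(\#k)=\emptyset$ and $I(+g.m)=I(-g.m)=I(g.m)$, as the paper defines $I(-)$. Your two necessary conditions are the right ones and your impossibility arguments are sound: with no content-sensing yield on $in{:}1$ the control flow, hence the final state of $out{:}1$, is independent of the input; with $E=\{c\}$ and no output sensing the final content is $w$ complemented a number of times depending only on $x$, so it is never constant in $w$; with $E\subseteq\{0\}$ or $E\subseteq\{1\}$ the opposite constant is unreachable from the wrong initialisation. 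On the positive side your witness patterns do land within the bound, and in fact more uniformly than you suggest: whenever two absolute setters, or one absolute setter plus the complement, are available, the pattern ``plain write of a known value; test $in{:}1$; conditional second write; $!$'' gives four instructions \emph{whatever} the yields of the output methods are (the constant-yield trick of $X_3$ is not needed), while the extremal case $E=\{c\}$ with output sensing requires the two-instruction normalisation of $X_7$ and lands exactly on five, which Proposition~\ref{complementation} shows is optimal for that interface, so the stated bound is tight. What remains is genuinely routine: the exhaustiveness check of the case split on $E$, output sensing and input sensing (which does close up), and the small observation that when the sensing output method is distinct from the complementing one the normalisation still fits in two instructions. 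I see no step that fails.
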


\subsection{Parity checking}
The second  example of the use of interfaces as constraints concerns 
the role of auxiliary registers in single pass  instruction sequences for computing multivariate functions on Booleans.
We will survey the results of~\cite{BergstraM2016} while reformulating these in terms of interfaces.

Let $I^n = \sum_{l =1}^n in{:}l.\{i/i\} + out0{:}1.\{1/1\}$. The function $P$ on bit sequences is given by: $P(0) = 0, P(1) = 1,  
P(0,\alpha) = P(\alpha), P(1,\alpha) = 1 - P(\alpha)$. $P(-)$ determines the parity of a sequence of bits.
We are interested in instruction sequences for computing $P(-)$ from inputs
stored in input registers with focus $in{:}1,\ldots, in{:}n$. 

From~\cite{BergstraM2016} we take that the instruction sequence $\mathrm{PARIS}^0_n$ 
computes parity for $n$ bits: 
\[\mathrm{PARIS}^0_0 =\, !\]
\[\mathrm{PARIS}^0_1 = +in{:}1.i/i;out0{:}1.1/1;!\] and for $n >1$:
 $\mathrm{PARIS}^0_n = +in{:}1.i/i:\gsc_{l=2}^n (\#4;+in{:}l.i/i;\#3;\#3;-in{:}l.i/i);out0{:}1.1/1;!$
 
Formalisation of the fact that these instruction sequences perform parity checking looks as follows in the 
notation of~\cite{BergstraM2012b} and~\cite{BergstraM2018}.

For all $n \geq 1$ and for all bit sequences $(b_1,\ldots b_n)$:
\[\partial_{\{in{:}1,\dots,in{:}n\}}(|\mathrm{PARIS}^0_n|\bullet 
	(out0{:}1.br(0) \oplus \bigoplus_{k=1}^n in{:}k.br(b_n) ) = out{:}1.br(P(b_1,\ldots b_n))\]
For $n > 0$ we find that $\LLOC(\mathrm{PARIS}^0_n) = 5 (n-1) + 3 = 5.n -2$. 

Next consider
the interface $I^n_a = I^n+aux{:}1.\{i/c,i/i\}$ and the instruction sequences $\mathrm{PARIS}^1_n$ with
$I(\mathrm{PARIS}^1_n) \subseteq I_a$. 
\[\mathrm{PARIS}^1_0 = !\]
\[\mathrm{PARIS}^1_1 = \mathrm{PARIS}^0_1\]
 and for $n > 1$:
$\mathrm{PARIS}^1_n = \gsc_{l=1}^n (+in{:}l.i/i;aux0{:}1.i/c);+aux0{:}1.i/i;out0{:}1.1/1;!$.

In~\cite{BergstraM2016} it is shown that $\mathrm{PARIS}^1_n$ computes $P(-)$ on $n$ inputs. In
formal notation this reads:
\[\partial_{\{in{:}1,\dots,in{:}n, aux0{:}1\}}(|\mathrm{PARIS}^1_n|\bullet 
	(out0{:}1.br(0)\oplus aux0{:}1.br(0) \oplus \bigoplus_{k=1}^n in{:}k.br(b_n) ) =\]\[ out{:}1.br(P(b_1,\ldots b_n))\]
For $n>1$ $\LLOC(\mathrm{PARIS}^1_n) = 2n+3$. 
Moreover it was shown in~\cite{BergstraM2016} that from $n=7$ upwards, 
each instruction sequence in $\mathrm{IS}_{ba}(I^n_a)$
which computes $P(-)$ has more instructions than $2n+3$, thereby establishing that the availability of
an initialised auxiliary register in this particular case 
enables to write a shorter instruction sequence.
We are unaware, however, of conclusive answers to the following questions.
\begin{problem} What is the lowest LLOC size of a single pass instruction sequence in 
$\mathrm{IS}_{br}(I^n)$which computes $P(-)$ on $n$ single bit registers? 
\end{problem}
\begin{problem}
What is the lowest LLOC size of a single pass instruction sequence in 
$\mathrm{IS}_{br}(I^n_a)$which computes $P(-)$ on $n$ single bit registers?
\end{problem}

\subsection{Proving the expressive power of large jumps}
\label{simplifying}
We return to Example~\ref{Exam2} and prove a partial result the proof of which becomes manageable 
by imposing significant constraints on the required interface of instruction sequences involved. 
The interface constraint excludes the use of auxiliary registers and imposes that input actions 
complement the input at the same time.
We recall from Example~\ref{Exam2} the instruction sequences 
\noindent
$X_E^k =+inout{:}1.i/c;\#2k+2;_{l=1}^{k} (+inout\_a{:}l.i/c;out0\_a{:}l.1/1);!;$\\
$~\quad\quad \quad  ;_{l=1}^{k} (+inout\_b{:}l.i/c;out0\_b{:}l.1/1);!$.

\begin{proposition} 
\label{largejumps}
\label{Exam2ctd}
Let $k > 3$ and suppose that $X$ is a finite single pass instruction sequence such that the following four conditions are met:
\begin{enumerate}
\item $I(X) \subseteq I(X_E^k)$,
\item $X$ computes the same function as $X_E^{k}$,
\item $\LLOC(X) \leq \LLOC(X_E^k)$,
and 
\item the only termination instruction in $X$ is its final instruction,
\end{enumerate}
then $X$  contains at least one jump instruction of length  $k/2$ or more.
\end{proposition}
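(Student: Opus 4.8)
The plan is to argue by contradiction: suppose every jump instruction occurring in $X$ has length strictly less than $k/2$, and derive a contradiction with conditions (1)--(4). First I would pin down what the interface constraint (1) together with correctness (2) forces $X$ to contain. Since $I(X)\subseteq I(X_E^k)$, the only method available on each of the foci $inout{:}1$, $inout\_a{:}l$, $inout\_b{:}l$ is $i/c$, and the only method on each $out0\_a{:}l$, $out0\_b{:}l$ is $1/1$; in particular there is no auxiliary register and no way to reset an output register once it has been set to $1$. Recalling the behaviour of $E^k$: when the control bit $inout{:}1$ is $0$ the ``$b$''-registers are never touched, so every $out0\_b{:}l$ must keep its initial value $0$, and symmetrically when the control bit is $1$. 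For each $l$ there is an input (control bit $0$, $inout\_a{:}l=1$) on which $out0\_a{:}l$ must be set, so the instruction $out0\_a{:}l.1/1$ must occur in $X$; arguing symmetrically, $X$ must contain the $4k$ pairwise distinct instructions $out0\_a{:}l.1/1$, $out0\_b{:}l.1/1$ and the reads $inout\_a{:}l.i/c$, $inout\_b{:}l.i/c$ for $l\le k$, at least one read $inout{:}1.i/c$, and, by (4), exactly one termination instruction, placed last. Writing $J$ for the number of jump instructions and $c\ge 1$ for the number of $inout{:}1$-instructions, condition (3), $\LLOC(X)\le 4k+4$, gives $4k+c+1+J\le 4k+4$, hence $c+J\le 3$ and $J\le 2$.

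Next I would analyse the two runs $\rho_0,\rho_1$ of $X$ on inputs whose control bit is $0$ and $1$ respectively. Because $E^k$ is total and the unique termination instruction is the final one, each run halts exactly there, so it traces a forward-only path (no backward jumps are available) from the first instruction to the last. Correctness then forces that in $\rho_0$ not a single instruction $out0\_b{:}l.1/1$ is executed: executing one would set $out0\_b{:}l$ to $1$ with no possibility of resetting it, violating the required output $0$. Hence all $k$ positions carrying the instructions $out0\_b{:}l.1/1$ lie off the path of $\rho_0$; call such positions avoided. In a forward-only computation a position is avoided only by being jumped over or by being the target of a test-skip, and once a position is jumped over it is never revisited. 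A forward jump $\#j$ avoids exactly the $j-1$ positions strictly between its source and its target.

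The main obstacle, and the heart of the proof, is to show that the inherent one-step skipping performed by test instructions cannot cheaply avoid these $out0\_b{:}l.1/1$ positions, so that the avoidance must essentially be charged to jumps. I would establish the following sublemma: if an instruction immediately following a test is avoided by a test-skip in $\rho_0$ and that instruction is some $out0\_b{:}l.1/1$, then the test must be one of the control reads $inout{:}1.i/c$. Indeed, a test on $inout\_a{:}m.i/c$ yields both bit values as the input ranges over all control-$0$ inputs, so on the inputs where it yields the continue-value the following instruction is executed; were that instruction $out0\_b{:}l.1/1$ this would corrupt $out0\_b{:}l$, contradicting correctness. A test on an output focus either never skips (method $1/1$ always yields $1$) or itself writes $1$ into an output register, so it cannot sit harmlessly before an avoided $b$-output either. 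Thus at most $c$ of the $b$-output positions are avoided by test-skips.

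Finally I would assemble the count by cases on $J\le 2$. Each jump taken in $\rho_0$ is traversed at most once (forward-only), and under the contradiction hypothesis each avoids fewer than $k/2-1$ positions. If $J=2$ then $c=1$, so the jumps avoid fewer than $2(k/2-1)=k-2$ positions and test-skips avoid at most one, a total below $k-1$. If $J=1$ then $c\le 2$, giving a total below $(k/2-1)+2=k/2+1$. If $J=0$ then nothing is jumped over and at most $c\le 3$ positions are avoided by test-skips. In each case, using $k>3$, strictly fewer than $k$ positions are avoided in $\rho_0$; this contradicts the fact established above that all $k$ instructions $out0\_b{:}l.1/1$ must be avoided. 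The contradiction shows that $X$ must contain a jump of length at least $k/2$; the hypothesis $k>3$ is exactly what is needed to close the case $J=0$ (and to make the asserted length $k/2$ exceed the short jumps that are trivially available).
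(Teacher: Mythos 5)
Your overall strategy---forcing the $k$ positions holding the $out0\_b{:}l.1/1$ instructions to be avoided on control-$0$ runs and then bounding the avoidance capacity of short jumps and test-skips against the budget $c+J\le 3$---is sound, and it is organised genuinely differently from the paper's proof, which instead bounds from below the number of instructions \emph{executed} per run, concludes that some input register would have to be read three times, and then finds no room left in the LLOC budget for the jumps. Your dual formulation has the advantage of covering the case $J=0$ explicitly, which the paper's closing sentence does not really address. However, the sublemma you yourself identify as the heart of the argument has a genuine gap. You claim that a test $\pm inout\_a{:}m.i/c$ sitting immediately before an avoided $out0\_b{:}l.1/1$ ``yields both bit values as the input ranges over all control-$0$ inputs''. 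That step fails as stated: (i) after the initial value of $inout\_a{:}m$ is flipped, the modified run may diverge at an \emph{earlier} access to that register and never reach the test in question; and (ii) the yield of $i/c$ at a given occurrence equals the initial register content only when that occurrence is the \emph{first} access to the focus along the run---a second or third occurrence can have a yield that is constant over all inputs that reach it, and hence can ``harmlessly'' skip a $b$-write. A similar looseness affects your output-focus case: $-out0\_a{:}m.1/1$ always yields $1$, hence always skips, and on control-$0$ inputs with $inout\_a{:}m=1$ it may legitimately be executed, so it is not excluded on a generic control-$0$ run.

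Both defects are repairable, but the repair needs an ingredient you have not supplied. Anchor the count at the all-zero input: there \emph{no} output write whatsoever may be executed, which disposes of the output-focus tests, and the flip argument becomes valid for every \emph{first} read of an $inout\_a{:}m$ or $inout\_b{:}m$ (the two runs agree up to that point and the yield flips, so the following $b$-write would be executed on a control-$0$ input---a contradiction). A \emph{non-first} read is a duplicate instruction and must be charged against the budget $D+J\le 2$, where $D$ is the number of basic instructions in excess of one per focus; this gives at most $D+1$ of the $b$-write positions avoided by test-skips, and the resulting totals, fewer than $Jk/2-2J+3$, still fall below $k$ for $J=0,1,2$ precisely when $k>3$. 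Without this accounting, the cases $J\le 1$ with a duplicated input read are not covered by your argument.
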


\begin{proof} $X$ For $X$ to compute the same transformation as $X^k_E$ it must contain
at least $4k+1$ instructions with precisely that number of different foci as occur in the basic instructions of $X^k_E$. 
$X$ may in addition contain at most two more instructions each of which are either a basic instructions or  a jump, 
or a termination instruction.

The number of basic instructions that must be performed in order to compute the required 
transformation ranges between a minimum of $k+1$, reading $in{:}1$ and either reading all inputs $inout\_a{:}l$ 
or reading all inputs $inout\_b{:}l$ while not writing any outputs, which comes about upon taking all inputs equal to $0$, 
and  a maximum of $2k+1$, involving the maximum most $k$ updates of output registers, 
a maximum which is reached only when all inputs that are read have value 1. 
The total number of instructions that are performed 
during a run is between $k+2$ (the minimal number of basic instructions and a termination instruction) and $2k+4$ 
(the maximally required number of basic instructions plus termination plus two more instructions as mentioned).

If $X$ only contains a termination at the final position and at most two  jumps  $\#l_1$ and $\#l_2$ with 
$l_1 + l_2 \leq k$, or a single jump $\#l$ with $l \leq k$,  then at least 
$(4k+2 -l)/2 \geq \sfrac{3}{2} \cdot k+1> k+2$  instructions are performed during each run of $X$, 
as each basic instruction can at best have its successor skipped, and the  jump instruction(s) 
skips at most $k-1$ of its (both) successors, and importantly no intermediate termination can take place. Now consider
 a run where no non-zero inputs  have been observed and hence no output was written. 
It follows that because this run is performing more basic instructions than $k+1$, and because it cannot perform any output actions, as setting an output to 1 cannot be undone, said run must perform a read from the same input 
more than once, and in fact that input must be read three times at least because its content needs to be set to 1 at the end. 

We find that for some focus with role $in$ or $inout$ three basic instructions are present in $X$ so that 
no room is left for the jump instruction(s) which we assumed to be present. We find a contradiction which concludes the proof.
\end{proof}

\section{Addition of natural numbers}
Addition of natural numbers in binary notation and with equal numbers of bits takes $2 n$ inputs and 
produces $n+1$ outputs. I will present some instruction sequences for addition relative to different interfaces.

Let  $I_n = \sum_{l =1}^n (in\_a{:}l.\{i/i\} +in\_b{:}l.\{i/i\})$. First we consider the interface 
$I_n^\prime = I_n +  \sum_{l =1}^n (out0{:}l.\{1/1\}) + out0{:}(n+1).\{1/1\} + aux0{:}1.\{i/0,1/1\}$.
The following single pass instruction sequences $Add_n$ for this task, relative to $I_n^\prime$ 
achieve $\LLOC(Add_n) =14  n + 3$. The auxiliary
register $aux{:}1$ is used as a carry.\\

\noindent $Add_n=$
$\gsc_{k=1}^{n}( $\\
$~\quad +in\_a{:}k.i/i;-in\_b{:}k.i/i;\#4;+aux0{:}1.i/1;\#9;\#9;$\\
$~\quad -in\_a{:}k.i/i;+in\_b{:}k.i/i;\#4;+aux0{:}1.i/0;\#3;\#3;$\\
$~\quad -aux0{:}1.i/i;out0{:}k.1/1;$\\
$~\quad );$
$+aux0{:}1.i/0;out0{:}(n+1).1/1;!$\\
                         
\noindent Working over $I_n^{\prime\prime}=I^n +  \sum_{l =1}^n (out0{:}l.\{1/1\}) + out0{:}(n+1).\{i/i,i/1,i/0\}$ 
the carry can be identified with 
$out0{:}(n+1)$, thereby reducing the number of instructions to $14 n +1$.\\

\noindent $Add^\prime_n=$
$ \gsc_{k=1}^{n}( $\\
$~\quad +in\_a{:}k.i/i;-in\_b{:}k.i/i;\#4;+out0{:}(n+1).i/1;\#9;\#9;$\\
$~\quad -in\_a{:}k.i/i;+in\_b{:}k.i/i;\#4;+out0{:}(n+1).i/0;\#3;\#3;$\\
$~\quad -out0{:}(n+1).i/i;out0{:}k.1/1;$\\
$~\quad );!$\\

\noindent Making use of the fact that for $k=1$ the check on $out0{:}n+1$ is 
redundant because no assignment to it has yet been made
we find, again working relative to $I_n^{\prime\prime}$, while assuming $n> 0$, that $Add^{\prime\prime}_n$ 
with LLOC $ 14n -5$: \\   

\noindent $Add^{\prime\prime}_n=$
$+in\_a{:}1.i/i;-in\_b{:}1.i/i;\#3;out0{:}(n+1).i/1;\#4;-in\_a{:}1.i/i;+in\_b{:}1.i/i;out0{:}1.1/1;$\\
$~\quad \gsc_{k=2}^{n}( $\\
$~\quad +in\_a{:}k.i/i;-in\_b{:}k.i/i;\#4;+out0{:}(n+1).i/1;\#9;\#9;$\\
$~\quad -in\_a{:}k.i/i;+in\_b{:}k.i/i;\#4;+out0{:}(n+1).i/0;\#3;\#3;$\\
$~\quad -out0{:}(n+1).i/i;out0{:}k.1/1;$\\
$~\quad );!$  \\        

\noindent An remaining question is this:
 \begin{problem} Are there for any $n>0$ single pass instruction sequences $Add^\star_n$ for addition 
 over interface $I^{\prime\prime}_n$ with $\LLOC(Add^\star_n) <14n -5$? If so, what are the 
 shortest single pass instruction sequences for addition for this interface?
 \end{problem}
\subsection{Allowing bit complementation}
Now let $I_n^{\prime\prime\prime} = I_n + \sum_{l =1}^{n+1} out0{:}l.\{i/c\} + out0{:}(n+1).\{i/0\}$. 
W.r.t. $I_n^{\prime\prime\prime} $ 
the following instruction sequences $Add^{\prime\prime\prime}$ with $\LLOC(Add^{\prime\prime\prime})= 8n $ 
implement addition. Here we do without a carry bit.\\
 
\noindent $Add^{\prime\prime\prime}_n=$
$+in\_a{:}1.i/i;out0{:}1/c;-in\_b{:}1.i/i;\#3;+out0{:}i/c;out0{:}2.i/c;$\\
$~\quad \gsc_{k=2}^{n}( $\\
$~\quad -in\_a{:}k.i/i;\#3;+out0{:}i/c;out0{:}(n+1).i/c;$\\
$~\quad -in\_b{:}k.i/i;\#3;+out0{:}i/c;out0{:}(n+1).i/c;$\\
$~\quad );!$  \\   

\begin{problem}
\label{addition}
For each $n>0$: is there a single pass PGA instruction $X$ implementing  
addition of two $n$ bit naturals over the interface
 $I_n^{\prime\prime\prime}$ with $\LLOC(X)< 8n$? And more generally what is the 
 lowest LLOC that can be achieved for this task?
 \end{problem}
 
 \subsection{concluding remarks}
 Finding shortest possible instruction sequences for addition is a prerequisite for similar work on multiplication which  is
 a viable topic of future research. LLOC in combination with the notion of quantitative expressiveness presents 
 a promising approach to complexity theory for instruction sequences and allows 
 for investigation which is not primarily focused on asymptotics but is rather more of a combinatorial nature. 
 The use of interfaces
 allows essential flexibility concerning matters of quantitative expressiveness.
                                                                                                           
\addcontentsline{toc}{section}{References}

\end{document}